\def\BibTeX{{\rm B\kern-.05em{\sc i\kern-.025em b}\kern-.08em
    T\kern-.1667em\lower.7ex\hbox{E}\kern-.125emX}}
\newcommand{\argmax}{\mathop{\rm argmax}}
\theoremstyle{plain}
\newtheorem{theorem}     {Theorem}
\newtheorem{lemma}       {Lemma}
\newtheorem{problem}     {Problem}
\newtheorem{fact}        {Fact}
\begin{document}

\title{Robust Densest Subgraph Discovery%\\
%\title{Robust Densest Subgraph Discovery under Uncertainty of Edge Weights%\\
%{\footnotesize \textsuperscript{*}Note: Sub-titles are not captured in Xplore and
%should not be used}
%\thanks{Identify applicable funding agency here. If none, delete this.}
}

%\author{Anonymous Author(s)}

\author{\IEEEauthorblockN{Atsushi Miyauchi}
\IEEEauthorblockA{RIKEN AIP}
%\textit{Anonymous}\\
Tokyo, Japan\\
atsushi.miyauchi.hv@riken.jp
%}
\and
\IEEEauthorblockN{Akiko Takeda}
\IEEEauthorblockA{The University of Tokyo / RIKEN AIP} 
%\textit{name of organization (of Aff.)}\\
Tokyo, Japan\\
takeda@mist.i.u-tokyo.ac.jp
%\and
%\IEEEauthorblockN{6\textsuperscript{th} Given Name Surname}
%\IEEEauthorblockA{\textit{dept. name of organization (of Aff.)} \\
%\textit{name of organization (of Aff.)}\\
%City, Country \\
%email address}
}

\maketitle

\begin{abstract}
Dense subgraph discovery is an important primitive in graph mining, which has a wide variety of applications in diverse domains. 
In the densest subgraph problem, given an undirected graph $G=(V,E)$ with an edge-weight vector $w=(w_e)_{e\in E}$, 
we aim to find $S\subseteq V$ that maximizes the density, i.e., $w(S)/|S|$, where $w(S)$ is the sum of the weights of the edges in the subgraph induced by $S$. 
Although the densest subgraph problem is one of the most well-studied optimization problems for dense subgraph discovery, 
there is an implicit strong assumption; it is assumed that the weights of all the edges are known exactly as input. 
In real-world applications, there are often cases where we have only \emph{uncertain} information of the edge weights. 
In this study, we provide a framework for dense subgraph discovery under the uncertainty of edge weights. 
Specifically, we address such an uncertainty issue using the theory of \emph{robust optimization}. 
First, we formulate our fundamental problem, the \emph{robust densest subgraph problem}, and present a simple algorithm. 
We then formulate the \emph{robust densest subgraph problem with sampling oracle} 
that models dense subgraph discovery using an edge-weight sampling oracle, %in real-world applications, 
and present an algorithm with a strong theoretical performance guarantee. 
Computational experiments using both synthetic graphs and popular real-world graphs demonstrate 
the effectiveness of our proposed algorithms. 
\end{abstract}

\begin{IEEEkeywords}
Graph mining, densest subgraph, uncertainty, robust optimization. 
\end{IEEEkeywords}

\section{Introduction}~\label{sec:introduction}
Dense subgraph discovery, or extracting a dense component in a graph, is an important primitive in graph mining, 
which has a wide variety of applications in diverse domains. 
A typical application is the identification of components 
that have certain special roles or possess important functions in underlying systems represented by graphs. 
For example, consider the protein--protein interaction graphs, 
where vertices represent the proteins within a cell and 
edges (resp.~edge weights) represent the interactions (resp.~strength of interactions) among the proteins. 
The dense components in this graph are likely to be the sets of proteins 
that exhibit identical or similar functions within the cell~\cite{Bader_Hogue_03}. 
As another example, consider the Web graph, where vertices represent web pages and edges represent the hyperlinks among them. 
The dense components in this graph are generally communities 
(i.e., the set of web pages addressing identical or similar topics)~\cite{Dourisboure+_07} 
and occasionally spam link farms~\cite{Gibson+_05}, which are effective for improving Web search engines. 
Other application examples include identifying regulatory motifs in DNA~\cite{Fratkin+_06}, 
decision-making for cost-effective marketing strategies~\cite{Miyauchi+_15}, 
expert team formation~\cite{Bonchi+_14,Tsourakakis+_13}, 
and real-time story identification in micro-blogging streams~\cite{Angel+_12}. 

The densest subgraph problem is one of the most well-studied optimization problems for dense subgraph discovery. 
Let $G=(V,E)$ be an undirected graph with an edge-weight vector $w=(w_e)_{e\in E}$. 
We denote by $G[S]$ the subgraph induced by $S\subseteq V$, i.e., $G[S]=(S,E(S))$, 
where $E(S)=\{\{u,v\}\in E\mid u,v\in S\}$.  
%Note that $w_e>0$ for every $e\in E$. 
For an edge-weight vector $w=(w_e)_{e\in E}$, 
the \emph{density} of $S\subseteq V$ is defined as $f_w(S)=w(S)/|S|$, 
where $w(S)$ is the sum of the weights of the edges in $G[S]$, i.e., $w(S)=\sum_{e\in E(S)}w_e$. 
In the \emph{(weighted) densest subgraph problem}, given an undirected graph $G=(V,E)$ with an edge-weight vector $w=(w_e)_{e\in E}$, 
we aim to find $S\subseteq V$ that maximizes the density $f_w(S)=w(S)/|S|$. 
An optimal solution is called a \emph{densest subgraph}. 

The densest subgraph problem has recently attracted significant interest 
because it can be solved exactly in polynomial time and with adequate approximation in almost linear time. 
There are exact algorithms such as Goldberg's flow-based algorithm~\cite{Goldberg_84} and Charikar's LP-based algorithm~\cite{Charikar_00}. 
Moreover, Charikar~\cite{Charikar_00} demonstrated that the greedy peeling algorithm designed by Asahiro et al.~\cite{Asahiro+_00} 
%which runs in almost linear time for weighted graphs and linear time for unweighted graphs, 
is a $1/2$-approximation algorithm\footnote{
A feasible solution is said to be \emph{$\alpha$-approximate} 
if its objective value is greater than or equal to the optimal value times $\alpha$. 
An algorithm is called an \emph{$\alpha$-approximation algorithm} 
if it runs in polynomial time and returns an $\alpha$-approximate solution for any instances. 
%For an $\alpha$-approximation algorithm, the parameter $\alpha$ is called an \emph{approximation ratio} of the algorithm. 
} 
for the problem. 
This can be implemented to run in $O(m+n\log n)$ time for weighted graphs and $O(m+n)$ time for unweighted graphs, 
where $n=|V|$ and $m=|E|$. 
%This algorithm runs in almost linear time for weighted graphs and linear time for unweighted graphs. 

%%In addition to the original densest subgraph problem, there are many interesting problem variations, 
%%which include the size-constraint variants~\cite{Andersen_Chellapilla_09,Feige+01,Khuller_Saha_09}, 
%%the variant generalizing the term $w(S)$ in the density~\cite{Tsourakakis_15}, 
%%and the variant generalizing the term $|S|$ in the density~\cite{Kawase_Miyauchi_17}. 
%%Furthermore, a large body of work has devoted to the streaming or dynamic settings of the densest subgraph problem~\cite{Bahmani+_12,Bhattacharya+_15,Epasto+_15,Hu+_17,McGregor+_15,Nasir+_17}. 
%%%Danisch et al.~\cite{Danisch+_17}
%%Some literatures have considered the densest subgraph problem 
%%on hypergraphs~\cite{Hu+_17,Miyauchi+_15} or on multilayer networks~\cite{Galimberti+_17}. 

However, in the densest subgraph problem, there is an implicit strong assumption; 
it is assumed that the weights of all the edges are known exactly as input. 
In numerous real-world applications, there are often cases where we have only \emph{uncertain} information of the edge weights. 
For example, consider the protein--protein interaction graphs. 
In the generation process of such graphs, the edge weights representing the strength of the interactions among the proteins 
are commonly obtained through biological experiments using measuring instruments with some noises. 
In such a scenario, we have only the estimated values for \emph{true} edge weights. % with some confidence intervals if possible. 
%As another example, consider 
Therefore, it is challenging to provide a framework for dense subgraph discovery under the uncertainty of edge weights.

\subsection{Our Contribution}
In this study, we provide a framework for dense subgraph discovery under the uncertainty of edge weights. 
%To model the densest subgraph problem under such an uncertainty, we utilize the theory of \emph{robust optimization}. 
Specifically, we address such an uncertainty issue using the theory of \emph{robust optimization}.

%We first describe our first result above in detail. 

%We formulate the \emph{robust densest subgraph problem} for dense subgraph discovery under the uncertainty of edge weights. 
To model the uncertainty of edge weights in real-world applications, 
we assume that we have only an \emph{edge-weight space} $W=\times_{e\in E}[l_e,r_e]\subseteq \times_{e\in E}[0,\infty)$ 
(rather than an edge-weight vector $w=(w_e)_{e\in E}$) 
that contains the \emph{unknown} true edge-weight vector $w^\text{true}=(w^\text{true}_e)_{e\in E}$. 
The edge-weight space can be considered as a product of the confidence intervals of the true edge weights, 
each of which (i.e., $[l_e,r_e]$ for $e\in E$) can be obtained in practice from theoretically guaranteed lower and upper bounds on the true edge weight 
or repeated sampling of an estimated value of the true edge weight. 

The key question is as follows: In this uncertain situation, how can we evaluate the quality of $S\subseteq V$?
Note here that as we know nothing about $w^\text{true}$ apart from the fact that $w^\text{true}\in W$,
we cannot directly use the value of $f_{w^\text{true}}(S)$ for evaluating $S$. 
To answer the question, we use a well-known concept in the theory of robust optimization. 
In the robust optimization paradigm, the quality of a solution for a robust optimization problem 
is generally evaluated using a measure called the \emph{robust ratio}. 
In our scenario, the robust ratio of $S\subseteq V$ under edge-weight space $W$ is defined as 
the multiplicative gap between the density of $S$ (i.e., $f_{w'}(S)$) and the density of $S^*_{w'}$ (i.e., $f_{w'}(S^*_{w'})$)
under the worst-case edge-weight vector $w'\in W$, where $S^*_{w'}$ is an optimal solution to the densest subgraph problem on $G$ with $w'$. 
Intuitively, $S\subseteq V$ with a large robust ratio has a density close to the optimal value 
even on $G$ with the edge-weight vector selected adversarially from $W$. 
Using the robust ratio, we formulate the \emph{robust densest subgraph problem} as follows:
Given an undirected graph $G=(V,E)$ with an edge-weight space $W=\times_{e\in E}[l_e,r_e]$, 
we aim to find $S\subseteq V$ that maximizes the robust ratio under $W$. 

For the robust densest subgraph problem, we first provide a strong negative result; 
specifically, we show that there exist some instances $G=(V,E)$ with $W=\times_{e\in E}[l_e,r_e]$ 
for which any (deterministic) algorithm returns $S\subseteq V$ that has a robust ratio of $O(1/n)$. 
%We analyze the robust densest subgraph problem and present a simple algorithm that has the best possible guarantee in some sense. 
Then, in contrast to this negative result, we present a simple algorithm that utilizes an exact algorithm for the (original) densest subgraph problem. 
We demonstrate that for any instance that satisfies $\min_{e\in E}l_e>0$, 
our algorithm returns $S\subseteq V$ that has a robust ratio of at least $\frac{1}{\min_{e\in E}\frac{r_e}{l_e}}$. 
Moreover, we prove that the lower bound on the robust ratio achieved by our proposed algorithm 
is the best possible except for the constant factor. 

The lower bound on the robust ratio achieved by our algorithm (i.e., $\frac{1}{\min_{e\in E}\frac{r_e}{l_e}}$) is still small, 
although it is the best possible except for the constant factor. 
This negative result was caused by the fact that 
in the robust densest subgraph problem, we were \emph{excessively conservative} in evaluating the quality of $S\subseteq V$, 
that is, we aimed to find $S\subseteq V$ that has a relatively large density compared to the optimal value on $G$ 
with \emph{any} edge-weight vector $w\in W$. 
In some real-world applications, 
each confidence interval (i.e., $[l_e,r_e]$ for $e\in E$) may be obtained from repeated sampling of an estimated value of the true edge weight; 
therefore, we conjecture that we can obtain a significantly better lower bound on the robust ratio by using such samplings more sophisticatedly. 
%therefore, we wonder whether we have a much better lower bound on the robust ratio by using such samplings more sophisticatedly. 
%To model this situation, we assume the existence of some useful sampling oracle. 

To this end, we formulate the \emph{robust densest subgraph problem with sampling oracle} as follows: 
We are given an undirected graph $G=(V,E)$ with an edge-weight space $W=\times_{e\in E}[l_e,r_e]$, 
wherein the unknown true edge-weight vector $w^\text{true}=(w^\text{true}_e)_{e\in E}$ exists. 
In addition, we have access to an edge-weight sampling oracle
that accepts an edge $e\in E$ as input and returns a real value as output, in time $\theta$, 
that was drawn independently from a distribution on $[l_e,r_e]$ 
in which the expected value is equal to the true edge weight $w^\text{true}_e$. 
Given $\gamma \in (0,1)$, we aim to find $W_\text{out}\subseteq W$ 
that satisfies $w^\text{true}\in W_\text{out}$ with a probability of at least $1-\gamma$
and $S_\text{out}\subseteq V$ that maximizes the robust ratio under $W_\text{out}$. 
An important fact is that if we obtain $S_\text{out}$ with an objective function value of $\alpha$, 
the subset $S_\text{out}$ is an $\alpha$-approximate solution for the densest subgraph problem on $G$ with $w^\text{true}$, 
with a probability of at least $1-\gamma$. 

For the robust densest subgraph problem with sampling oracle, we present an algorithm with a strong theoretical performance guarantee. 
Specifically, for any $\gamma \in (0,1)$ and $\epsilon >0$, our algorithm obtains 
$W_\text{out}\subseteq W$ that satisfies $w^\text{true}\in W_\text{out}$ with a probability of at least $1-\gamma$ 
and $S_\text{out}\subseteq V$ that has a robust ratio of at least $1-\epsilon$ under the edge-weight space $W_\text{out}$, 
in time pseudo-polynomial in the size of $G$ and $W$, $\theta$, and $1/\epsilon$. 
%in time polynomial in the number of calls of the sampling oracle, the size of $G$ and $W$, and $1/\epsilon$. 
Therefore, we observe that our algorithm obtains a $(1-\epsilon)$-approximate solution for the densest subgraph problem on $G$ with $w^\text{true}$, 
with a probability of at least $1-\gamma$. 
%It should be emphasized that our algorithm obtains a $S\subseteq V$ on $G$ with true edge-weight vector $w^\text{true}$. 

Finally, we conduct computational experiments to evaluate the effectiveness of our proposed algorithms 
in terms of both the quality of solutions and computation time. 
We compare our proposed algorithms with a certain baseline algorithm 
using both synthetic graphs and popular real-world graphs. 
To generate synthetic graphs appropriate for our experimental evaluation, we introduce a random graph model, 
which we refer to as the \emph{planted uncertain dense subgraph model}. 
With regard to real-world graphs, we introduce a random model for constructing an edge-weight space and a true edge-weight vector for a given graph, 
which we refer to as the \emph{knockout densest subgraph model}. 
The results demonstrate the effectiveness of our proposed algorithms.

%Our contribution can be summarized as follows: 
%\begin{enumerate}
%\item We formulate the \emph{robust densest subgraph problem}. 
%\item We formulate the \emph{robust densest subgraph problem with sampling oracle}. 
%\item We conduct thorough computational experiments to evaluate the effectiveness of our proposed algorithms 
%in terms of both the quality of solutions and computation time. 
%\end{enumerate}

\subsection{Related Work}\label{subsec:related}
Robust optimization, which has been actively studied in the field of operations research, 
is known to be an effective methodology for addressing optimization problems  
under uncertainty~\cite{Ben-Tal+09,Ben-Tal_Nemirovski_98,Ben-Tal_Nemirovski_99}. 
Recently, the theory of robust optimization has been widely applied to 
tasks in knowledge discovery and data mining, particularly to graph mining tasks. 
For example, Chen et al.~\cite{Chen+16} and He and Kempe~\cite{He_Kempe_16} studied \emph{robust influence maximization}, 
which is a robust variation of the popular graph mining task called \emph{influence maximization}. 
Their focus was on the influence maximization counterpart of our work; 
they aimed to find a subset of vertices that exhibits a large robust ratio in terms of the influence. 
In particular, Chen et al.~\cite{Chen+16} developed an algorithm with a theoretical performance guarantee using a certain sampling oracle. 
To the best of our knowledge, we are the first to utilize the theory of robust optimization 
for addressing dense subgraph discovery under uncertainty. 

Apart from the uncertainty of edge weights, 
a large body of work has been devoted to graph mining tasks with the uncertainty of the \emph{existence} of edges. 
In this scenario, it is generally assumed that we are given an \emph{uncertain graph}, i.e., 
a graph $G=(V,E)$ with a function $p:E\rightarrow [0,1]$ 
in which $e\in E$ is present with probability $p(e)$ whereas $e\in E$ is absent with probability $1-p(e)$. 
For a number of fundamental optimization problems on graphs, their counterparts on uncertain graphs have been introduced~\cite{Kassiano+16}. 
In particular, Zou~\cite{Zou_13} studied the densest subgraph problem on uncertain graphs. 
In this problem, given an uncertain graph $G=(V,E)$ with a function $p:E\rightarrow [0,1]$, 
we are asked to find $S\subseteq V$ that maximizes the expected value of the density. 
Zou~\cite{Zou_13} demonstrated that this problem can be reduced to the (original weighted) densest subgraph problem 
and developed a polynomial-time exact algorithm using the reduction. 
It should be noted that the problems we formulate in the present study cannot be addressed using uncertain graphs. 
In fact, uncertain graphs do not consider the uncertainty of edge weights; they only model the uncertainty of the existence of edges. 
%core decomposition on uncertain graphs~\cite{Bonchi+_14}
%\cite{Jin+11}
%\cite{Zou+_10_1}
%\cite{Zou+_10_2}

In addition to the variant on uncertain graphs, the densest subgraph problem has numerous noteworthy problem variations. 
Examples include the size-constraint variants~\cite{Andersen_Chellapilla_09,Bhaskara+_10,Feige+01,Khuller_Saha_09,Papailiopoulos+14} 
and the variants generalizing the term $w(S)$ in the density~\cite{Mitzenmacher+15,Miyauchi_Kakimura_18,Tsourakakis_15} 
and the term $|S|$ in the density~\cite{Kawase_Miyauchi_17}. 
Furthermore, a large body of work has been devoted to the streaming or dynamic settings of the densest subgraph problem~\cite{Bahmani+_12,Bhattacharya+_15,Epasto+_15,Hu+_17,McGregor+_15,Nasir+_17}. 
%Danisch et al.~\cite{Danisch+_17}
Some literatures have considered the densest subgraph problem 
on hypergraphs~\cite{Hu+_17,Miyauchi+_15} or on multilayer networks~\cite{Galimberti+_17}.

\subsection{Paper Organization}\label{subsec:organization}
In Section~\ref{sec:preliminaries}, we revisit some existing algorithms for the densest subgraph problem, 
which will be used in the design of our proposed algorithms. 
In Section~\ref{sec:robust}, we formulate the robust densest subgraph problem and present a simple algorithm. 
Then, in Section~\ref{sec:robust_sampling}, we formulate the robust densest subgraph problem with sampling oracle 
and present an algorithm with a strong theoretical performance guarantee. 
We report the results of our computational experiments in Section~\ref{sec:experiments}. 
We conclude the study in Section~\ref{sec:conclusion}. 
%We conclude the study and suggest future directions in Section~\ref{sec:conclusion}. 

\section{Preliminaries}\label{sec:preliminaries}

Here, we describe Charikar's LP-based exact algorithm for the densest subgraph problem~\cite{Charikar_00}, 
which will be used in the design of our proposed algorithms. 
%Let $G=(V,E)$ with $w=(w_e)_{e\in E}$ be an input to the densest subgraph problem. 
The algorithm introduces a variable $x_e$ for each $e\in E$ and a variable $y_v$ for each $v\in V$, 
and solves the following LP in polynomial time:  
\begin{alignat*}{3}
&\text{maximize}   &\ \ \, &\sum_{e\in E}w_ex_e\\
&\text{subject to} &       &x_e\leq y_u,\ x_e\leq y_v  &\quad &\forall e=\{u,v\}\in E,\\
&                  &       &\sum_{v\in V}y_v=1,\\
&                  &       &x_e,\,y_v\geq 0            &      &\forall e\in E,\,\forall v\in V.
\end{alignat*}
Intuitively, this LP is a standardized (i.e., linearized) version of a continuous relaxation of the original problem. % LOOK
Let $(x^*,y^*)$ be an optimal solution to this LP. 
For a real parameter $r\geq 0$, the algorithm introduces a sequence of subsets of vertices 
$S(r)=\{v\in V\mid y^*_v\geq r\}$ 
and finds $r^*\in \argmax_{r\in [0,1]} f_w(S(r))$. 
It should be noted that such $r^*$ can be found by simply examining $r=y^*_v$ for each $v\in V$. 
Finally, the algorithm returns $S(r^*)$. 
Charikar~\cite{Charikar_00} established that the output of the algorithm, i.e., $S(r^*)$, 
is an optimal solution to the densest subgraph problem. 

The above LP-based algorithm is elegant and convenient to implement (if we use a mathematical programming solver such as Gurobi Optimizer or IBM ILOG CPLEX); 
however, in practice, it is applicable only to graphs with a maximum of hundreds of thousands of edges. 
Recently, Balalau et al.~\cite{Balalau+_15} developed a highly effective preprocessing algorithm for the densest subgraph problem. 
%Let $G=(V,E)$ with $w=(w_e)_{e\in E}$ be an input to the densest subgraph problem. 
Their preprocessing algorithm first runs the greedy peeling algorithm 
to obtain a $1/2$-approximate solution $S_\text{approx}\subseteq V$. 
Specifically, the greedy peeling algorithm iteratively removes a vertex with the smallest (weighted) degree 
in a current remaining graph to obtain a sequence of subsets from $V$ to $\emptyset$ 
and returns the best subset among the sequence. 
Then, the preprocessing algorithm removes every vertex whose weighted degree is strictly less than $f_w(S_\text{approx})$. 
Balalau et al.~\cite{Balalau+_15} indicated that this preprocessing does not remove any vertex contained in $S^*\subseteq V$, 
where $S^*$ is an arbitrary optimal solution to the densest subgraph problem. 
Therefore, whenever we wish to obtain an optimal solution to the densest subgraph problem, 
we can apply Balalau et al.'s preprocessing to the input. 
It should be noted that in practice, 
Charikar's LP-based algorithm in combination with Balalau et al.'s preprocessing  
can obtain an optimal solution in reasonable time (i.e., a few tens of minutes) even on graphs with a few millions of edges. 
%By the optimality of $S^*$, we have that $f_w(S^*)\geq f_w(S^*\setminus \{v\})$ for any $v\in S^*$. 
%From this inequality, we see that $d_{(S^*,w)}(v)\geq f_w(S^*)$ for any $v\in S^*$, 
%where $d_{(S,w)}(v)$ denotes the weighted degree of $v\in S$ on edge-weight vector $w=(w_e)_{e\in E}$ in the subgraph induced by $S$, 
%i.e., $d_{(S,w)}(v)=\sum_{e\in E(S):\, v\in e}w_{\{u,v\}}$. 
%Since $d_{(V,w)}(v)\geq d_{(S^*,w)}(v)$ for any $v\in S^*$, we have that $d_{(V,w)}(v)\geq d_{(S^*,w)}(v)\geq f_w(S^*)\geq f_w(S_\text{approx})$. 

\section{Robust Densest Subgraph Problem}\label{sec:robust}
In this section, we formulate the robust densest subgraph problem and present a simple algorithm.

\subsection{Problem Definition}\label{subsec:robust_problem}
%%In the densest subgraph problem, there is an implicit unrealistic assumption; 
%%it is assumed that the weights of all edges are known exactly as input. 
%%However, in many real-world applications, there are often cases where we have only \emph{uncertain} information of edge weights. 

To model the uncertainty of edge weights in real-world applications, 
we assume that we have only an \emph{edge-weight space} $W=\times_{e\in E}[l_e,r_e]\subseteq \times_{e\in E}[0,\infty)$ 
(rather than an edge-weight vector $w=(w_e)_{e\in E}$) 
that contains the \emph{unknown} true edge-weight vector $w^\text{true}=(w^\text{true}_e)_{e\in E}$. 
%The edge-weight space can be seen as a product of confidence intervals of the true edge weights, 
%each of which (i.e., $[l_e,r_e]$) can be obtained in practice from theoretically guaranteed lower and upper values on the true edge weight 
%or repeated samplings for an estimated value of the true edge weight. 
%The key question is as follows: In this uncertain situation, how can we evaluate the quality of $S\subseteq V$?
As we know nothing about $w^\text{true}$ except for the fact $w^\text{true}\in W$,
we cannot directly use the value of $f_{w^\text{true}}(S)$ for evaluating $S\subseteq V$. 
Here, we use a well-known concept in the theory of robust optimization, which is called the \emph{robust ratio}. 
In our scenario, the robust ratio of $S\subseteq V$ under edge-weight space $W$ is defined as
\begin{align*}
\min_{w\in W}\frac{f_w(S)}{f_w(S^*_w)}, 
\end{align*}
where $S^*_w\subseteq V$ is a densest subgraph on $G$ with edge-weight vector $w$. 
Intuitively, $S\subseteq V$ with a large robust ratio has a density close to the optimal value 
even on $G$ with the edge-weight vector selected adversarially from $W$. 
Using the robust ratio, we formulate the \emph{robust densest subgraph problem} as follows: 
\begin{problem}[Robust densest subgraph problem]\label{prob:rds}
Given an undirected graph $G=(V,E)$ with an edge-weight space $W=\times_{e\in E} [l_e,r_e]\subseteq \times_{e\in E}[0,\infty)$, 
we are asked to find a subset of vertices $S\subseteq V$ that maximizes the robust ratio under edge-weight space $W$: 
\begin{align*}
\min_{w\in W}\frac{f_w(S)}{f_w(S^*_w)}, 
\end{align*}
where $S^*_w\subseteq V$ is a densest subgraph on $G$ with edge-weight vector $w$. 
\end{problem}

This problem is a generalization of the (original) densest subgraph problem. 
In fact, if $l_e=r_e$ holds for every $e\in E$, the problem reduces to the densest subgraph problem. 

Unfortunately, we have the following strong negative result for the robust densest subgraph problem. 
\begin{theorem}\label{thm:negative}
There exists an instance of the robust densest subgraph problem (Problem~\ref{prob:rds})
for which any (deterministic) algorithm returns $S\subseteq V$ that has a robust ratio of $O(1/n)$. 
\end{theorem}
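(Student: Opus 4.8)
The plan is to prove the statement by exhibiting a single fixed instance whose \emph{optimal} robust ratio is already $O(1/n)$. This suffices: since the robust ratio of a returned set depends only on that set and on the given edge-weight space, if no nonempty $S\subseteq V$ attains a robust ratio better than $O(1/n)$ on this instance, then \emph{every} algorithm must output such an $S$. In particular, the ``deterministic'' qualifier is not essential — the hardness is information-theoretic, coming from the instance itself rather than from fooling a particular algorithm. So the real content is choosing the instance correctly; the estimates afterward are routine.

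For the instance I would take $G=(V,E)$ to be a perfect matching on $n$ vertices (so $E$ consists of $n/2$ pairwise disjoint edges) and set $[l_e,r_e]=[0,1]$ for every $e\in E$. The choice $l_e=0$ is deliberate: it is exactly the regime excluded by the companion positive result, which assumes $\min_{e\in E}l_e>0$, and it is what lets the adversary annihilate the density of every proper subset. The key structural fact I would use is that the densest subgraph of a matching under any weight vector $w$ is a single heaviest edge, since any union of $t$ disjoint edges has density $\tfrac{1}{2}\,\mathrm{avg}_e w_e \le \tfrac{1}{2}\max_e w_e$; hence $f_w(S^*_w)=\tfrac12\max_{e\in E}w_e$.

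The core is then a short case analysis over the candidate output $S$. First, for any $S\neq V$ there is a vertex $v\notin S$ whose matching edge $e$ satisfies $e\not\subseteq S$; setting $w_e=1$ and all edges inside $S$ to $0$ gives $f_w(S)=0$ while $f_w(S^*_w)=\tfrac12$, so the robust ratio of $S$ equals $0$. Second, for $S=V$ one has $f_w(V)=\tfrac1n\sum_{e\in E}w_e$, and using $\sum_{e\in E}w_e\ge \max_{e\in E}w_e$ (with equality when a single edge is positive) yields
\[
\min_{w\in W,\,w\neq 0}\frac{f_w(V)}{f_w(S^*_w)}
=\min_{w\neq 0}\frac{2\sum_{e\in E}w_e}{n\max_{e\in E}w_e}=\frac{2}{n}.
\]
Thus the largest robust ratio achievable on this instance, over all nonempty $S$, is exactly $2/n$, attained only at $S=V$.

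Combining the two cases, the optimal robust ratio of the instance is $2/n=O(1/n)$, so any algorithm returns an $S$ with robust ratio at most $2/n$, which proves the theorem. The only point requiring mild care — and the step I expect to be the sole subtlety — is the treatment of the denominator: the degenerate all-zero weight vector makes $f_w(S^*_w)=0$ and must be excluded (equivalently, one restricts the minimum to $w$ with $f_w(S^*_w)>0$), and one must check that the minimizer in the $S=V$ case is attained at a genuinely nonzero boundary point, which it is. I do not expect any real difficulty beyond this bookkeeping and the verification that $S=V$ is the unique competitive subset.
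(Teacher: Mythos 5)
Your proof is correct and follows essentially the same strategy as the paper's: for any proper subset the adversary puts all weight on an edge outside $E(S)$ to force robust ratio $0$, and for $S=V$ a single positive edge forces ratio $2/n$. The only differences are cosmetic --- you specialize to a perfect matching where the paper allows any graph of minimum degree at least one, and you explicitly flag the degenerate all-zero weight vector, which the paper leaves implicit.
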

\begin{proof}
Let $G=(V,E)$ be any graph in which every vertex has degree of at least one. 
We take $W=\times_{e\in E}[l_e,r_e]$ that satisfies $l_e=0$ and $r_e>0$ for each $e\in E$. 
Note that any deterministic algorithm for Problem~\ref{prob:rds} returns some $S\subseteq V$. 

For any $S\subsetneq V$, there exists an edge $e'\in E\setminus E(S)$. 
We can construct an edge weight $w'=(w'_e)_{e\in E}$ such that for each $e\in E$, 
%$w'_e$ takes $r_e$ if $e=e'$ and takes $0$ otherwise. 
\begin{align*}
w'_e=
\begin{cases}
r_e  &\text{if } e=e',\\
0  &\text{otherwise}. 
\end{cases}
\end{align*}
As $w'\in W$ holds, the robust ratio of $S$ can be upper bounded as follows: 
\begin{align*}
\min_{w\in W}\frac{f_w(S)}{f_w(S^*_w)}
\leq \frac{f_{w'}(S)}{f_{w'}(S^*_{w'})}
= 0. 
\end{align*}

On the other hand, let $S=V$. 
Let us select an arbitrary edge $e'\in E$. 
We can again construct an edge weight $w'=(w'_e)_{e\in E}$ such that for each $e\in E$, 
\begin{align*}
w'_e=
\begin{cases}
r_e  &\text{if } e=e',\\
0  &\text{otherwise}. 
\end{cases}
\end{align*}
As $w'\in W$ again holds, the robust ratio of $S\ (=V)$ can be upper bounded as follows: 
\begin{align*}
\min_{w\in W}\frac{f_w(S)}{f_w(S^*_w)}
\leq \frac{f_{w'}(S)}{f_{w'}(S^*_{w'})} 
\leq \frac{r_e/n}{r_e/2}
=O\left(\frac{1}{n}\right). 
\end{align*}
Thus, we have the theorem. 
\end{proof}
%%\begin{proof}
%%Let $G=(V,E)$ be a path graph consisting of $n$ vertices and $n-1$ edges with $W=\times_{e\in E}[l_e,r_e]$. 
%%Assume that $l_e=0$ and $r_e>0$ hold for every $e\in E$. 
%%Note that any deterministic algorithm for Problem~\ref{prob:rds} returns some $S\subseteq V$. 
%%
%%For any $S\subsetneq V$, there exists an edge $e'\in E\setminus E(S)$. 
%%We can construct an edge weight $w'=(w'_e)_{e\in E}$ such that for each $e\in E$, 
%%%$w'_e$ takes $r_e$ if $e=e'$ and takes $0$ otherwise. 
%%\begin{align*}
%%w'_e=
%%\begin{cases}
%%r_e  &\text{if } e=e',\\
%%0  &\text{otherwise}. 
%%\end{cases}
%%\end{align*}
%%Since $w'\in W$ holds, the robust ratio of $S$ can be upper bounded as follows: 
%%\begin{align*}
%%\min_{w\in W}\frac{f_w(S)}{f_w(S^*_w)}
%%\leq \frac{f_{w'}(S)}{f_{w'}(S^*_{w'})}
%%= 0. 
%%\end{align*}
%%
%%On the other hand, let $S=V$. 
%%Let us choose an arbitrary edge $e'\in E$. 
%%We can again construct an edge weight $w'=(w'_e)_{e\in E}$ such that for each $e\in E$, 
%%\begin{align*}
%%w'_e=
%%\begin{cases}
%%r_e  &\text{if } e=e',\\
%%0  &\text{otherwise}. 
%%\end{cases}
%%\end{align*}
%%Since $w'\in W$ again holds, the robust ratio of $S\ (=V)$ can be upper bounded as follows: 
%%\begin{align*}
%%\min_{w\in W}\frac{f_w(S)}{f_w(S^*_w)}
%%\leq \frac{f_{w'}(S)}{f_{w'}(S^*_{w'})} 
%%\leq \frac{r_e/n}{r_e/2}
%%=O\left(\frac{1}{n}\right). 
%%\end{align*}
%%Thus, we have the theorem. 
%%\end{proof}

\subsection{Algorithm and Analysis}\label{subsec:robust_algorithm}
In contrast to the above negative result, we now present a simple algorithm for the robust densest subgraph problem, 
which utilizes an exact algorithm for the (original) densest subgraph problem. 
%%Let $G=(V,E)$ with $W=\times_{e\in E}[l_e,r_e]$ be an input to the robust densest subgraph problem. 
%%For edge-weight vector $w=(w_e)_{e\in E}$, let $S^*_w\subseteq V$ denote a densest subgraph on $G$ with $w$, 
%%i.e., $S^*_w\in \argmax\{f_w(S)\mid S\subseteq V\}$. 
Let $w^-=(l_e)_{e\in E}$ and $w^+=(r_e)_{e\in E}$. 
Our algorithm computes $S^*_{w^-}\subseteq V$, i.e., a densest subgraph on $G$ with extreme edge weight $w^-=(l_e)_{e\in E}$ and returns it. 
For reference, the procedure is described in Algorithm~\ref{alg:basic}. 
\begin{algorithm}[t]
\caption{Basic algorithm}\label{alg:basic}
\SetKwInOut{Input}{Input} 
\SetKwInOut{Output}{Output} 
\Input{\ $G=(V,E)$ with $W=\times_{e\in E}[l_e,r_e]$}
\Output{\ $S_\text{out}\subseteq V$}
$S_\text{out}\leftarrow~\text{A densest subgraph on } G \text{ with edge-weight vector }$\\
\hspace{10.5mm}$w^-=(l_e)_{e\in E}$\; 
%$S_\text{out}\leftarrow~\text{A densest subgraph on } G \text{ with edge-weight vector } w^-$\; 
\Return $S_\text{out}$\;
\end{algorithm}

In the following, we provide the theoretical performance guarantee of Algorithm~\ref{alg:basic}. 
To this end, we use the following lemma, 
which provides the fundamental property of the density function, 
i.e., the monotonicity of $f_w(S)$ with respect to edge-weight vector $w$. 
The proof is straightforward and therefore omitted. 
\begin{lemma}\label{lem:monotonicity}
Let $G=(V,E)$ be an undirected graph. 
Let $w_1$ and $w_2$ be edge-weight vectors such that $w_1\leq w_2$ holds. 
Then, for any $S\subseteq V$, it holds that $f_{w_1}(S)\leq f_{w_2}(S)$. 
\end{lemma}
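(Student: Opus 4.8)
The plan is to unfold the definition of the density and reduce the claim to a componentwise comparison of the numerators. Recall that $f_w(S)=w(S)/|S|$ with $w(S)=\sum_{e\in E(S)}w_e$, and that the hypothesis $w_1\leq w_2$ means the vectors are ordered componentwise, i.e.\ $(w_1)_e\leq (w_2)_e$ for every $e\in E$. Fixing an arbitrary nonempty $S\subseteq V$, the denominator $|S|$ is a positive integer that does not depend on the weight vector, so it suffices to compare the numerators $w_1(S)$ and $w_2(S)$.

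First I would restrict attention to the edge set $E(S)$ of the induced subgraph $G[S]$ and sum the per-edge inequalities $(w_1)_e\leq (w_2)_e$ over all $e\in E(S)$. This yields
\begin{align*}
w_1(S)=\sum_{e\in E(S)}(w_1)_e\leq \sum_{e\in E(S)}(w_2)_e=w_2(S).
\end{align*}
I would then divide both sides by $|S|>0$, which preserves the inequality, to conclude $f_{w_1}(S)=w_1(S)/|S|\leq w_2(S)/|S|=f_{w_2}(S)$, as desired. Since $S$ was arbitrary among nonempty subsets, the lemma follows; the empty set is handled trivially under the usual convention $f_w(\emptyset)=0$ (and is irrelevant in the densest subgraph setting, where $S$ is always nonempty).

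There is essentially no hard step here: the result is a direct consequence of the linearity of $w(S)$ in the weights together with the monotonicity of summation under a componentwise order. The only points that merit a moment of care are the edge case $S=\emptyset$ noted above and the observation that \emph{no} nonnegativity assumption on the weights is needed for the summation step, since summing an ordered family of reals preserves the order regardless of sign; nonnegativity (from $W\subseteq\times_{e\in E}[0,\infty)$) is used only implicitly to keep densities meaningful. This is why the authors describe the proof as straightforward and omit it.
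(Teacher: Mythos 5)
Your proof is correct and is exactly the straightforward argument the paper has in mind (the authors omit the proof entirely, noting only that it is straightforward): sum the componentwise inequalities over $E(S)$ and divide by $|S|>0$. Nothing further is needed.
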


The following theorem provides the theoretical performance guarantee of Algorithm~\ref{alg:basic}. 
More specifically, the theorem presents a lower bound on the robust ratio of the output of Algorithm~\ref{alg:basic} 
under a certain reasonable condition. 
\begin{theorem}\label{thm:robust_positive}
Let $G=(V,E)$ with $W=\times_{e\in E}[l_e,r_e]$ be an instance of the robust densest subgraph problem (Problem~\ref{prob:rds}). 
Suppose that $\min_{e\in E}l_e>0$ holds. 
%Suppose that $W$ satisfies $\min_{e\in E}l_e>0$. 
Then, Algorithm~\ref{alg:basic} returns $S\subseteq V$ 
that has a robust ratio of at least $\frac{1}{\max_{e\in E}\frac{r_e}{l_e}}$. 
\end{theorem}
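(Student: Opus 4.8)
The plan is to prove the bound pointwise: I would show that for \emph{every} fixed $w\in W$ the ratio $f_w(S)/f_w(S^*_w)$ is at least $1/\rho$, where $\rho=\max_{e\in E}\frac{r_e}{l_e}$ and $S=S^*_{w^-}$ is the output of Algorithm~\ref{alg:basic}. Since the robust ratio is the minimum of this quantity over $w\in W$, the pointwise bound yields the theorem directly. The structural fact I would record first is the componentwise sandwich $w^-\le w\le w^+\le \rho\,w^-$ valid for any $w\in W$; here the last inequality is exactly the definition of $\rho$, namely $r_e\le \rho\,l_e$ for every $e\in E$. I would also note the elementary scaling identity $f_{cw}(S)=c\,f_w(S)$ for any scalar $c\ge 0$, which is what converts the multiplicative relation $w^+\le \rho\,w^-$ into a statement about densities.

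Next I would bound the numerator and denominator separately using Lemma~\ref{lem:monotonicity}. For the numerator, monotonicity applied to $w\ge w^-$ gives $f_w(S)\ge f_{w^-}(S)=f_{w^-}(S^*_{w^-})$, where the equality is just that $S$ is the densest subgraph under $w^-$ returned by the algorithm. For the denominator I would chain three steps: first $f_w(S^*_w)\le f_{w^+}(S^*_w)$ by monotonicity since $w\le w^+$; then $f_{w^+}(S^*_w)\le f_{\rho w^-}(S^*_w)=\rho\,f_{w^-}(S^*_w)$ by monotonicity together with the scaling identity, using $w^+\le \rho\,w^-$; and finally $f_{w^-}(S^*_w)\le f_{w^-}(S^*_{w^-})$ because $S^*_{w^-}$ maximizes density under $w^-$. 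Combined, these give $f_w(S^*_w)\le \rho\,f_{w^-}(S^*_{w^-})$.

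Dividing the numerator bound by the denominator bound gives $f_w(S)/f_w(S^*_w)\ge 1/\rho$ for every $w\in W$, and taking the minimum over $w$ finishes the proof. The only point requiring care is well-definedness of the ratios: the hypothesis $\min_{e\in E}l_e>0$ is exactly what guarantees that every edge has strictly positive weight under $w^-$, hence under every $w\in W$, so that both $f_{w^-}(S^*_{w^-})$ and $f_w(S^*_w)$ are positive whenever $G$ has an edge. I do not expect a genuine obstacle here; the step most prone to error is assembling the three monotonicity/scaling inequalities for the denominator in the correct order, and being explicit about where the positivity hypothesis is invoked so that no division by zero occurs.
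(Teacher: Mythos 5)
Your proof is correct and rests on the same ingredients as the paper's own argument: Lemma~\ref{lem:monotonicity}, the componentwise bound $r_e\le \rho\, l_e$ with $\rho=\max_{e\in E}r_e/l_e$, and the optimality of a densest subgraph under a fixed extreme weight vector. The only difference is the pivot set: the paper bounds the denominator by $f_{w^+}(S^*_{w^+})$ and then chains $f_{w^-}(S^*_{w^-})\ge f_{w^-}(S^*_{w^+})\ge \rho^{-1}f_{w^+}(S^*_{w^+})$, whereas you stay with $S^*_w$ and invoke only the optimality of $S^*_{w^-}$ --- an immaterial reordering of the same inequalities.
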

\begin{proof}
Recall that Algorithm~\ref{alg:basic} returns $S^*_{w^-}$, 
i.e., a densest subgraph on $G$ with extreme edge weight $w^-=(l_e)_{e\in E}$. 
Then, we can lower bound the robust ratio as follows: 
\begin{align*}
\min_{w\in W}\frac{f_w(S^*_{w^-})}{f_w(S^*_{w})}
&\geq \frac{f_{w^-}(S^*_{w^-})}{f_{w^+}(S^*_{w^+})}
\geq \frac{f_{w^-}(S^*_{w^+})}{f_{w^+}(S^*_{w^+})}\\
&\geq \frac{1}{\max_{e\in E}\frac{r_e}{l_e}}\cdot \frac{f_{w^+}(S^*_{w^+})}{f_{w^+}(S^*_{w^+})}
= \frac{1}{\max_{e\in E}\frac{r_e}{l_e}},
\end{align*}
where the first inequality follows from Lemma~\ref{lem:monotonicity} with the fact that $w^-\leq w\leq w^+$ for any $w\in W$ 
and the optimality of $S^*_{w^+}$ in terms of the edge-weight vector $w^+$. 
\end{proof}

This lower bound on the robust ratio is significantly better than the upper bound presented in Theorem~\ref{thm:negative}. 
The upper bound in Theorem~\ref{thm:negative} becomes zero as $n$ increases, whereas $\frac{1}{\max_{e\in E}\frac{r_e}{l_e}}$ does not. 
However, it should be noted that Theorem~\ref{thm:robust_positive} does not contradict Theorem~\ref{thm:negative} 
because Theorem~\ref{thm:robust_positive} supposes that $\min_{e\in E}l_e >0$ holds. 

The following theorem indicates that 
the lower bound on the robust ratio achieved by Algorithm~\ref{alg:basic} is the best possible except for the constant factor. 
\begin{theorem}
There exists an instance $G=(V,E)$ with $W=\times_{e\in E}[l_e,r_e]$ of the robust densest subgraph problem (Problem~\ref{prob:rds}) that satisfies $\min_{e\in E}l_e>0$ 
%There exists an instance of Problem~\ref{prob:rds} 
for which any (deterministic) algorithm returns $S\subseteq V$ that has a robust ratio of $O\left(\frac{1}{\max_{e\in E}\frac{r_e}{l_e}}\right)$. 
\end{theorem}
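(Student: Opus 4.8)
The plan is to exhibit a family of instances, parameterized by a quantity $\rho\ge 1$ that will equal $\max_{e\in E}\frac{r_e}{l_e}$, for which the optimal robust ratio over all $S\subseteq V$ is $O(1/\rho)$; since a deterministic algorithm simply outputs some particular $S$ on this input, its robust ratio is then bounded by the same quantity, giving the matching lower bound complementing Theorem~\ref{thm:robust_positive}. Concretely, I would take $G$ to be a perfect matching consisting of $k=\lceil\rho\rceil$ disjoint edges $e_1,\dots,e_k$ on $2k$ vertices, and set $[l_e,r_e]=[1,\rho]$ for every edge $e$. This immediately gives $\min_{e\in E}l_e=1>0$ and $\max_{e\in E}\frac{r_e}{l_e}=\rho$, so the instance satisfies the hypothesis of the theorem while letting $\rho$ grow.

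The first step is to pin down the densest subgraphs on a matching. For any $w\in W$, a subset spanning the endpoints of $j$ matching edges uses at least $2j$ vertices, so its density is at most $\frac{1}{2}$ times the average of those $j$ included weights; hence every subset has density at most $\frac{1}{2}\max_i w_{e_i}$, attained by a single heaviest edge, and therefore $f_w(S^*_w)=\frac{1}{2}\max_i w_{e_i}$ for all $w$. This fixes the denominator of the robust ratio on every instance in the family. Likewise, if $S$ contains the endpoints of exactly the edges indexed by $I$ (with $|S|\ge 2|I|$), then $f_w(S)=\frac{\sum_{i\in I}w_{e_i}}{|S|}$, and enlarging $|S|$ only decreases the density.

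Next I would split on whether $E(S)$ misses some matching edge. If $E(S)$ omits at least one edge $e'$, the adversary sets $w_{e'}=\rho$ and every other weight to $1$; then $f_w(S)\le\frac{|I|}{2|I|}=\frac{1}{2}$ while $f_w(S^*_w)=\rho/2$, so the ratio is at most $1/\rho$. The remaining and genuinely more delicate case is $S=V$, the ``hedging'' choice that includes every edge; this is exactly the configuration that defeats the naive two-edge construction, where spreading over all edges averages out the adversary's spike and yields a constant robust ratio. Here the adversary again sets one weight to $\rho$ and the rest to $1$, giving $f_w(V)=\frac{\rho+(k-1)}{2k}$ against optimum $\rho/2$, so the ratio equals $\frac{\rho+k-1}{k\rho}$.

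The crux is calibrating $k$ so that even this averaged configuration is forced down to $O(1/\rho)$: taking $k\ge\rho$ yields $\frac{\rho+k-1}{k\rho}\le\frac{2k}{k\rho}=\frac{2}{\rho}$. Combining the two cases shows every $S\subseteq V$ has robust ratio at most $2/\rho=O\!\left(\frac{1}{\max_{e\in E}r_e/l_e}\right)$, which is the desired conclusion. I expect the main obstacle to be precisely this calibration—ensuring that no large, ``diversified'' subset evades the bound—since any construction in which the densest subgraph is too localized (as with two parallel edges) lets the whole-graph choice achieve a constant robust ratio; using $\Theta(\rho)$ parallel edges is what makes the dilution from averaging match the adversary's multiplicative leverage.
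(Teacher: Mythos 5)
Your proof is correct and follows essentially the same approach as the paper: the paper uses any graph with $|E(S)|\leq\alpha|S|$ and degree at least one, uniform intervals $[l,r]$ with $n\geq r/l$, and the identical two-case adversary (weight $r$ on one edge missed by $S$, or on an arbitrary edge when $S=V$); your perfect matching with $[1,\rho]$ and $k\geq\rho$ is just the concrete instantiation with $\alpha=1/2$. The calibration you flag as the crux ($k\geq\rho$, i.e., $n\geq r/l$) is exactly condition (iii) in the paper's construction.
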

\begin{proof}
Let $G=(V,E)$ be any graph in which every vertex has degree of at least one and for any $S\subseteq V$, it holds that $|E(S)|\leq \alpha |S|$ for some constant $\alpha$. 
We take $W=\times_{e\in E}[l_e,r_e]$ that satisfies the following three conditions: (i) $l_e=l$ and $r_e=r$ for some $l$ and $r$, respectively; 
(ii) $l>0$; and (iii) $n\geq r/l$. 
%%We take $W=\times_{e\in E}[l_e,r_e]$ that satisfies $\min_{e\in E}l_e>0$. 
%%For simplicity, assume that all $l_e$'s are equal to $l$ and all $r_e$'s are equal to $r$. 
%%Moreover, assume that $n\geq r/l$ holds. 
Note that any deterministic algorithm for Problem~\ref{prob:rds} returns some $S\subseteq V$. 

For any $S\subsetneq V$, there exists an edge $e'\in E\setminus E(S)$.  
We can construct an edge-weight vector $w'=(w'_e)_{e\in E}$ such that for each $e\in E$, 
\begin{align*}
w'_e=
\begin{cases}
r  &\text{if } e=e',\\
l  &\text{otherwise}. 
\end{cases}
\end{align*}
As $w'\in W$ holds, the robust ratio of $S$ can be upper bounded as follows: 
\begin{align*}
\min_{w\in W}\frac{f_w(S)}{f_w(S^*_w)}
&\leq \frac{f_{w'}(S)}{f_{w'}(S^*_{w'})}\\
&\leq \frac{\alpha |S|\cdot l/|S|}{r/2}
< \frac{2\alpha}{r/l} 
= O\left(\frac{1}{r/l}\right), 
\end{align*}

On the other hand, let $S=V$. 
Select an arbitrary edge $e'\in E$. 
We can again construct an edge weight $w'=(w'_e)_{e\in E}$ such that for each $e\in E$, 
\begin{align*}
w'_e=
\begin{cases}
r  &\text{if } e=e',\\
l  &\text{otherwise}. 
\end{cases}
\end{align*}
As $w'\in W$ again holds, the robust ratio of $S\ (=V)$ can be evaluated as follows: 
\begin{align*}
\min_{w\in W}\frac{f_w(S)}{f_w(S^*_w)}
&\leq \frac{f_{w'}(S)}{f_{w'}(S^*_{w'})}
\leq \frac{((\alpha n-1)l+r)/n}{r/2}\\
&< \frac{\alpha l+r/n}{r/2}
\leq \frac{2(\alpha+1)}{r/l} 
= O\left(\frac{1}{r/l}\right), 
\end{align*}
where the last inequality follows from the fact that $n\geq r/l$ holds. 
Since $r/l=\max_{e\in E}\frac{r_e}{l_e}$ holds, we have the theorem. 
\end{proof}

\section{Robust Densest Subgraph Problem with Sampling Oracle}\label{sec:robust_sampling}
In this section, we formulate the robust densest subgraph problem with sampling oracle 
and present an algorithm with a strong theoretical performance guarantee.

\subsection{Problem Definition}\label{subsec:robust_sampling_problem}
The lower bound on the robust ratio achieved by Algorithm~\ref{alg:basic} (i.e., $\frac{1}{\min_{e\in E}\frac{r_e}{l_e}}$) 
is still small, 
%for the case where $\min_{e\in E}l_e >0$ holds is still small, 
although it is the best possible except for the constant factor. 
This negative result was caused by the fact that 
in the robust densest subgraph problem, we were \emph{excessively conservative} in evaluating the quality of $S\subseteq V$, 
that is, we aimed to find $S\subseteq V$ that has a relatively large density compared to the optimal value on $G$ 
with \emph{any} edge-weight vector $w\in W$. 
In some real-world applications, 
each confidence interval (i.e., $[l_e,r_e]$ for $e\in E$) may be obtained from repeated sampling of an estimated value of the true edge weight; 
therefore, we conjecture that we can obtain a significantly better lower bound on the robust ratio by using such samplings more sophisticatedly. 

%In the previous section, we have shown that there exists instances of Problem~\ref{prob:rds} 
%for which the robust ratio under $W$ is upper bounded by quite small values 
%such as $O(1/n)$ or $O\left(\frac{1}{\max_{e\in E}\frac{r_e}{l_e}}\right)$, 
%which means that even an optimal solution to Problem~\ref{prob:rds} may be meaningless in the context of robust optimization. 
%%In the previous section, we have shown that Problem~\ref{prob:rds} is too difficult to obtain a meaningful solution. 
%Specifically, we first proved that any deterministic algorithm only returns $S\subseteq V$ that has a robust ratio of $O(1/n)$. 
%Then, we proved that even if we consider only the instances that satisfy $\min_{e\in E} l_e>0$, 
%any deterministic algorithm only returns $S\subseteq V$ that has a robust ratio of $O\left(\frac{1}{\max_{e\in E}\frac{r_e}{l_e}}\right)$. 

To this end, we now formulate the \emph{robust densest subgraph problem with sampling oracle} as follows. 

%To overcome the above situation, we assume the existence of some useful sampling oracle. 
%Specifically, the expected value of the edge weight sampled is the true value. 
%This natural assumption drastically 
\begin{problem}[Robust densest subgraph problem with sampling oracle]\label{prob:sampling}
We are given an undirected graph $G=(V,E)$ with an edge-weight space $W=\times_{e\in E} [l_e,r_e]\subseteq \times_{e\in E}[0,\infty)$, 
wherein the unknown true edge-weight vector $w^\text{true}=(w^\text{true}_e)_{e\in E}$ exists. 
%Let $\text{Pr}$ be a distribution on independent $m$ 
In addition, we have access to an edge-weight sampling oracle that accepts an edge $e\in E$ as input and returns a real value as output, in time $\theta$, 
that was drawn independently from a distribution on $[l_e,r_e]$ in which the expected value is equal to the true edge weight $w^\text{true}_e$. 
Given $\gamma \in (0,1)$, we are asked to find 
\begin{itemize}
\item $W_\text{out}\subseteq W$ that satisfies $\mathrm{Pr}[w^\text{true}\in W_\text{out}]\geq 1-\gamma$ and 
\item $S_\text{out}\subseteq V$ that maximizes the robust ratio under edge-weight space $W_\text{out}$, i.e., 
\end{itemize}
\begin{align*}
\min_{w\in W_\text{out}}\frac{f_w(S_\text{out})}{f_w(S^*_w)}, 
\end{align*}
where $S^*_w\subseteq V$ is a densest subgraph on $G$ with edge-weight vector $w$. 
\end{problem}

Let $(W_\text{out},S_\text{out})$ be an output of Problem~\ref{prob:sampling}. 
Since $w^\text{true}\in W_\text{out}$ holds with a probability of at least $1-\gamma$, the following inequality 
\begin{align*}
\frac{f_{w^\text{true}}(S_\text{out})}{f_{w^\text{true}}(S^*_{w^\text{true}})}
\geq \min_{w\in W_\text{out}}\frac{f_w(S_\text{out})}{f_w(S^*_w)}
\end{align*}
also holds with a probability of at least $1-\gamma$. 
Therefore, if $S_\text{out}$ has an objective function value of $\alpha$, 
we observe that $S_\text{out}$ is an $\alpha$-approximate solution for the densest subgraph problem on $G$ with $w^\text{true}$, 
with a probability of at least $1-\gamma$.

\subsection{Algorithm and Analysis}\label{subsec:robust_sampling_algorithm}
Here, we present an algorithm for Problem~\ref{prob:sampling}, with a strong theoretical performance guarantee. 
Our algorithm first obtains $S^*_{w^-}$, i.e., a densest subgraph on $G$ with extreme edge weight $w^-=(l_e)_{e\in E}$, 
to compute the value of $f_{w^-}(S^*_{w^-})$. 
Then, for each $e\in E$, the algorithm iteratively obtains estimated values for the true edge weight of $e$ 
using a sampling oracle for an appropriate number of times, say $t_e$, which will be defined later. 
Note that $t_e$ is determined using the value of $f_{w^-}(S^*_{w^-})$. 
Using the estimated values, 
the algorithm constructs an edge-weight space $W_\text{out}=\times_{e\in E}[l^\text{out}_e,r^\text{out}_e]\subseteq W$, 
which also depends on the value of $f_{w^-}(S^*_{w^-})$, 
and computes a densest subgraph $S_\text{out}$ on $G$ with extreme edge weight $w_\text{out}^-=(l^\text{out}_e)_{e\in E}$. 
The complete procedure is described in Algorithm~\ref{alg:sampling_rev}. 
Note that our algorithm assumes $\max_{e\in E}l_e>0$. 
%where $\epsilon>0$ is a parameter that controls the quality of the performance guarantee. % (see Theorem~\ref{thm:sampling_rev} for details). 

\begin{algorithm}[t]
\caption{Algorithm with a sampling oracle}\label{alg:sampling_rev}
%\caption{Algorithm with theoretical performance guarantee with parameter $\epsilon>0$}\label{alg:sampling_rev}
\SetKwInOut{Input}{Input} 
\SetKwInOut{Output}{Output} 
\Input{\ $G=(V,E)$ with $W=\times_{e\in E}[l_e,r_e]$ (satisfying $\max_{e\in E}l_e>0$), a sampling oracle, $\gamma\in (0,1)$, and $\epsilon >0$}
%\Input{\ $G=(V,E)$ with $W=\times_{e\in E}[l_e,r_e]$,\\ a sampling oracle, and parameter $\gamma>0$}
\Output{\ $(W_\text{out},S_\text{out})$ such that $W_\text{out}\subseteq W$ and $S_\text{out}\subseteq V$}
%{\small \texttt{\# Algorithm has access to a sampling oracle}}\\
$S^*_{w^-}\leftarrow~\text{A densest subgraph on } G$ with $w^-=(l_e)_{e\in E}$\;
%{\color{red} $S^*_{w^-}\leftarrow \text{ A densest subgraph on } G \text{ with edge weight } w^-$\;}
\For{each $e\in E$}{
  \uIf{$l_e=r_e$}{
    $l_e^\text{out}\leftarrow l_e$, $r_e^\text{out}\leftarrow r_e$\;
  }
  \Else{
    %$t_e\leftarrow \left\lceil \frac{(n-1)^2(r_e-l_e)^2 \ln\left(\frac{2m}{\gamma}\right)}{2\epsilon^2\cdot f_{w^-}(S^*_{w^-})^2} \right\rceil$\;
    %{\color{red} $t_e\leftarrow \left\lceil \frac{(n-1)^2(r_e-l_e)^2 \ln\left(\frac{2m}{\gamma}\right)}{2\epsilon^2\cdot f_{w^-}(S^*_{w^-})^2} \right\rceil$\;}
    $t_e\leftarrow \left\lceil \frac{m(r_e-l_e)^2 \ln\left(\frac{2m}{\gamma}\right)}{\epsilon^2\cdot f_{w^-}(S^*_{w^-})^2} \right\rceil$\;
    %$t_e\leftarrow \left\lceil \frac{nd_\text{avg}(r_e-l_e)^2 \ln\left(\frac{2m}{\gamma}\right)}{2\epsilon^2\cdot f_{w^-}(S^*_{w^-})^2} \right\rceil$\;
    Call the sampling oracle for $e$ for $t_e$ times and observe $w_e^1,\dots,w_e^{t_e}$\;
    %Sample a weight of $e$ for $t_e$ times and observe $w_e^1,\dots,w_e^{t_e}$\;
    $\hat{p}_e\leftarrow \frac{1}{t_e}\sum_{i=1}^{t_e}w_e^i$\;
    %$\delta\leftarrow \frac{\epsilon \cdot f_{w^-}(S^*_{w^-})}{n-1}$\;
    $\delta\leftarrow \frac{\epsilon \cdot f_{w^-}(S^*_{w^-})}{\sqrt{2m}}$\;
    %$\delta\leftarrow \frac{\epsilon \cdot f_{w^-}(S^*_{w^-})}{\sqrt{n\cdot d_\text{avg}}}$\;
    $l_e^\text{out}\leftarrow \max\{l_e,\,\hat{p}_e-\delta\}$, $r_e^\text{out}\leftarrow \min\{r_e,\,\hat{p}_e+\delta\}$\;
    %$l_e^\text{out}\leftarrow \max\{0,\,\hat{p}_e-\delta\}$ and $r_e^\text{out}\leftarrow \hat{p}_e+\delta$\;
  }
}
$W_\text{out}\leftarrow \times_{e\in E} [l_e^\text{out},r_e^\text{out}]$\;
$S_\text{out}\leftarrow~\text{A densest subgraph on } G$ with $w^-_\text{out}=(l^\text{out}_e)_{e\in E}$\; 
\Return $(W_\text{out},S_\text{out})$\;
\end{algorithm}

The following theorem provides the theoretical performance guarantee of Algorithm~\ref{alg:sampling_rev}. 

\begin{theorem}\label{thm:sampling_rev}
%Assume that 
%Under this assumption, Algorithm~\ref{alg:sampling} is an FPTAS (fully-polynomial-time approximation oracle) for Problem~\ref{prob:sampling}. 
%That is, for any $\epsilon >0$, 
Let $G=(V,E)$ with $W=\times_{e\in E}[l_e,r_e]$, a sampling oracle, and $\gamma\in (0,1)$ be an instance of Problem~\ref{prob:sampling}. 
Suppose that $\max_{e\in E}l_e>0$ holds. 
Then, for any $\epsilon>0$, 
Algorithm~\ref{alg:sampling_rev} returns
\begin{itemize}
\item $W_\text{out}\subseteq W$ that satisfies $\mathrm{Pr}[w^\text{true}\in W_\text{out}]\geq 1-\gamma$ and 
\item $S_\text{out}\subseteq V$ that satisfies 
\begin{align*}
\min_{w\in W_\text{out}}\frac{f_w(S_\text{out})}{f_w(S^*_w)}\geq 1-\epsilon
\end{align*}
\end{itemize}
in time pseudo-polynomial in the size of $G$ and $W$, $\theta$, and $1/\epsilon$. 
%in time polynomial in the number of calls of the sampling oracle, the size of $G$ and $W$, and $1/\epsilon$. 
\end{theorem}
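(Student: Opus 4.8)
The plan is to establish the two guarantees separately, since they concern different aspects of the output. For the probabilistic containment $\mathrm{Pr}[w^\text{true}\in W_\text{out}]\geq 1-\gamma$, the key tool is a concentration inequality (Hoeffding's inequality) applied edge by edge. For each $e\in E$ with $l_e<r_e$, the algorithm averages $t_e$ independent samples drawn from a distribution on $[l_e,r_e]$ with mean $w^\text{true}_e$, so $\hat{p}_e$ is an unbiased estimator. Hoeffding's inequality gives $\mathrm{Pr}[|\hat{p}_e-w^\text{true}_e|>\delta]\leq 2\exp\bigl(-2 t_e \delta^2/(r_e-l_e)^2\bigr)$. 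First I would substitute the chosen values of $t_e$ and $\delta=\epsilon f_{w^-}(S^*_{w^-})/\sqrt{2m}$ and verify that the exponent is at least $\ln(2m/\gamma)$, so that the per-edge failure probability is at most $\gamma/m$. The interval $[l^\text{out}_e,r^\text{out}_e]$ is obtained by intersecting $[\hat{p}_e-\delta,\hat{p}_e+\delta]$ with $[l_e,r_e]$; since $w^\text{true}_e$ already lies in $[l_e,r_e]$, the event $w^\text{true}_e\in[l^\text{out}_e,r^\text{out}_e]$ holds whenever $|\hat{p}_e-w^\text{true}_e|\leq\delta$. A union bound over the (at most $m$) nondegenerate edges then yields total failure probability at most $\gamma$, giving the first bullet.

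For the robust-ratio guarantee, I would reuse the argument structure of Theorem~\ref{thm:robust_positive}. Since $S_\text{out}$ is a densest subgraph on $G$ with edge weight $w^-_\text{out}=(l^\text{out}_e)_{e\in E}$, and every $w\in W_\text{out}$ satisfies $w^-_\text{out}\leq w\leq w^+_\text{out}:=(r^\text{out}_e)_{e\in E}$, Lemma~\ref{lem:monotonicity} together with the optimality of $S_\text{out}$ for $w^-_\text{out}$ gives
\begin{align*}
\min_{w\in W_\text{out}}\frac{f_w(S_\text{out})}{f_w(S^*_w)}
\geq \frac{f_{w^-_\text{out}}(S^*_{w^+_\text{out}})}{f_{w^+_\text{out}}(S^*_{w^+_\text{out}})}.
\end{align*}
The crucial new ingredient is that the sampling has shrunk each interval to width at most $2\delta$, so $r^\text{out}_e-l^\text{out}_e\leq 2\delta$ for every $e$. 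I would then bound the numerator from below by writing $f_{w^-_\text{out}}(S)=f_{w^+_\text{out}}(S)-\frac{1}{|S|}\sum_{e\in E(S)}(r^\text{out}_e-l^\text{out}_e)$ for $S=S^*_{w^+_\text{out}}$, and control the additive gap using $|E(S)|\leq\binom{|S|}{2}$ or more carefully, bounding the per-vertex defect. The target is to show this additive loss is at most $\epsilon\cdot f_{w^+_\text{out}}(S^*_{w^+_\text{out}})$, which would yield the ratio $\geq 1-\epsilon$.

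The main obstacle I anticipate is making the width-based additive bound translate cleanly into a multiplicative $(1-\epsilon)$ guarantee, and specifically connecting the denominator $f_{w^+_\text{out}}(S^*_{w^+_\text{out}})$ to the quantity $f_{w^-}(S^*_{w^-})$ that appears in the definition of $\delta$. The natural chain is $f_{w^+_\text{out}}(S^*_{w^+_\text{out}})\geq f_{w^-}(S^*_{w^-})$, since $w^-=(l_e)_{e\in E}\leq w^+_\text{out}$ by construction and $S^*_{w^+_\text{out}}$ is optimal for $w^+_\text{out}$; this is exactly where the assumption $\max_{e\in E}l_e>0$ ensures $f_{w^-}(S^*_{w^-})>0$ so that $\delta$ and the $t_e$ are well defined and positive. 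I would quantify the numerator defect as $\frac{1}{|S|}\sum_{e\in E(S)}(r^\text{out}_e-l^\text{out}_e)\leq \frac{2\delta\,|E(S)|}{|S|}$ and use the standard fact that a subgraph maximizing density has bounded edge-to-vertex ratio relative to its own density; pinning down the right combinatorial bound on $|E(S)|/|S|$ so that $2\delta\cdot|E(S)|/|S|\leq \epsilon\, f_{w^+_\text{out}}(S)$ is the delicate step. Finally, the running time is immediate: each $t_e=O\bigl(m(r_e-l_e)^2\ln(2m/\gamma)/(\epsilon^2 f_{w^-}(S^*_{w^-})^2)\bigr)$ is pseudo-polynomial in the sizes of $G$ and $W$, oracle time $\theta$, and $1/\epsilon$, and the two densest-subgraph computations are polynomial, so I would only need to sum these contributions.
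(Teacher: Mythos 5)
Your overall architecture matches the paper's proof: Hoeffding per edge plus a union bound for the containment guarantee, and the sandwich $\min_{w\in W_\text{out}}\frac{f_w(S_\text{out})}{f_w(S^*_w)}\geq \frac{f_{w^-_\text{out}}(S^*_{w^+_\text{out}})}{f_{w^+_\text{out}}(S^*_{w^+_\text{out}})}$ via Lemma~\ref{lem:monotonicity} and optimality, closed by the bridge $f_{w^+_\text{out}}(S^*_{w^+_\text{out}})\geq f_{w^-}(S^*_{w^-})$. The first bullet and the running-time claim are complete as you state them.

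The one step you leave open is the decisive one, and the direction you hint at for closing it is not the one that works. You propose to bound $2\delta\,|E(S)|/|S|$ using ``the standard fact that a subgraph maximizing density has bounded edge-to-vertex ratio relative to its own density.'' Relating $|E(S)|/|S|$ to $f_{w^+_\text{out}}(S)$ would require dividing by something like $\min_{e\in E(S)} r^\text{out}_e$, which can be arbitrarily small (the hypothesis is only $\max_{e\in E}l_e>0$, so individual intervals may sit near zero), and so this route does not yield a clean $\epsilon$-fraction of the density. What the paper actually uses (Lemma~\ref{lem:diff_sparse}) is the \emph{unconditional} combinatorial bound $|E(S)|/|S|\leq\sqrt{m/2}$ for every $S\subseteq V$, proved by a two-case analysis: if $|S|<\sqrt{2m}+1$ then $|E(S)|/|S|\leq(|S|-1)/2\leq\sqrt{m/2}$, and otherwise $|E(S)|/|S|\leq m/|S|\leq\sqrt{m/2}$. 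This is exactly why $\delta$ carries the factor $1/\sqrt{2m}$: it gives $2\delta\cdot\sqrt{m/2}=\epsilon\, f_{w^-}(S^*_{w^-})\leq\epsilon\, f_{w^+_\text{out}}(S^*_{w^+_\text{out}})$, which is the $(1-\epsilon)$ guarantee. With that bound substituted for your ``delicate step,'' your argument coincides with the paper's.
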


In the proof of the above theorem, we use the following form of Hoeffding bound: 
\begin{fact}[Hoeffding bound; Theorem~2 of Hoeffding~\cite{Hoeffding_63}]\label{fact:hoeffding}
Let $X_1,\dots,X_t$ be independent random variables such that $X_i\in [a_i,b_i]$ holds for any $i=1,\dots,t$. 
Then, for any $c> 0$, it holds that 
\begin{align*}
\mathrm{Pr}&\left[\left|\frac{1}{t}\sum_{i=1}^tX_i - \mathrm{E}\left[\frac{1}{t}\sum_{i=1}^tX_i\right]\right|\geq c\right]\\
&\leq 2\exp\left(\frac{-2t^2c^2}{\sum_{i=1}^t (b_i-a_i)^2}\right). 
\end{align*}
\end{fact}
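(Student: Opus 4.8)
The plan is to prove the bound by the standard Cram\'er--Chernoff (exponential moment) method, which reduces a tail probability to controlling a moment generating function. Write $S=\sum_{i=1}^t X_i$ and $\bar{X}=S/t$. It suffices to bound each one-sided tail $\mathrm{Pr}[\bar{X}-\mathrm{E}[\bar{X}]\geq c]$ and then add the symmetric bound for the lower tail to account for the factor of $2$. For the upper tail, I would fix an arbitrary $\lambda>0$, apply Markov's inequality to the nonnegative random variable $e^{\lambda(S-\mathrm{E}[S])}$, and use the event equivalence $\{\bar{X}-\mathrm{E}[\bar{X}]\geq c\}=\{S-\mathrm{E}[S]\geq tc\}$ to obtain $\mathrm{Pr}[\bar{X}-\mathrm{E}[\bar{X}]\geq c]\leq e^{-\lambda tc}\,\mathrm{E}[e^{\lambda(S-\mathrm{E}[S])}]$. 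Independence of $X_1,\dots,X_t$ then factorizes the moment generating function into $\prod_{i=1}^t \mathrm{E}[e^{\lambda(X_i-\mathrm{E}[X_i])}]$, so the whole problem reduces to bounding the moment generating function of a single centered, bounded random variable.

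The key ingredient, and the main obstacle, is Hoeffding's lemma: for a random variable $Y$ with $\mathrm{E}[Y]=0$ and $Y\in[a,b]$, one has $\mathrm{E}[e^{\lambda Y}]\leq \exp(\lambda^2(b-a)^2/8)$. I would prove this by first using convexity of $s\mapsto e^{\lambda s}$ to dominate $e^{\lambda Y}$ on $[a,b]$ by the chord through the endpoints, namely $e^{\lambda Y}\leq \frac{b-Y}{b-a}e^{\lambda a}+\frac{Y-a}{b-a}e^{\lambda b}$; taking expectations and using $\mathrm{E}[Y]=0$ removes the dependence on the distribution of $Y$ and leaves a deterministic function of $\lambda$, $a$, and $b$. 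Writing $p=-a/(b-a)\in[0,1]$ and $h=\lambda(b-a)$, this upper bound equals $e^{g(h)}$ for $g(h)=-ph+\log(1-p+pe^h)$. I would then check $g(0)=g'(0)=0$ and bound the second derivative as $g''(h)=q(1-q)\leq 1/4$, where $q=pe^h/(1-p+pe^h)\in[0,1]$; a second-order Taylor expansion with remainder then yields $g(h)\leq h^2/8=\lambda^2(b-a)^2/8$. This convexity-plus-Taylor estimate is the crux of the whole argument, since every other step is either a direct application of a standard inequality or elementary calculus.

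Applying Hoeffding's lemma to each centered variable $X_i-\mathrm{E}[X_i]\in[a_i-\mathrm{E}[X_i],\,b_i-\mathrm{E}[X_i]]$, an interval of width $b_i-a_i$, gives $\mathrm{E}[e^{\lambda(X_i-\mathrm{E}[X_i])}]\leq \exp(\lambda^2(b_i-a_i)^2/8)$, so the factorized product is at most $\exp\bigl(\frac{\lambda^2}{8}\sum_{i=1}^t(b_i-a_i)^2\bigr)$ and the upper tail is bounded by $\exp\bigl(-\lambda tc+\frac{\lambda^2}{8}\sum_{i=1}^t(b_i-a_i)^2\bigr)$. Finally I would optimize this free parameter: the exponent is a convex quadratic in $\lambda$ minimized at $\lambda^*=4tc/\sum_{i=1}^t(b_i-a_i)^2$, which gives the one-sided bound $\exp\bigl(-2t^2c^2/\sum_{i=1}^t(b_i-a_i)^2\bigr)$. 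Repeating the entire argument for $-X_i$ controls the lower tail by the same quantity, and combining the two tails by a union bound produces the stated factor of $2$, completing the proof.
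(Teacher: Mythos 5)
The paper offers no proof of this Fact---it is quoted verbatim from Theorem~2 of Hoeffding (1963)---and your argument is precisely the standard proof from that source: the Cram\'er--Chernoff exponential-moment reduction plus Hoeffding's lemma. Every step of yours checks out, including the chord-plus-Taylor estimate $g(h)\leq h^2/8$ via $g''(h)=q(1-q)\leq 1/4$ (note $p=-a/(b-a)\in[0,1]$ is legitimate because $\mathrm{E}[Y]=0$ forces $a\leq 0\leq b$) and the optimal choice $\lambda^*=4tc/\sum_{i=1}^t(b_i-a_i)^2$, which yields the exponent $-2t^2c^2/\sum_{i=1}^t(b_i-a_i)^2$, with the factor $2$ coming from the union bound over the two tails.
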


%%\begin{lemma}\label{lem:diff}
%%Let $G=(V,E)$ be an undirected graph. 
%%Let $w_1$ and $w_2$ be edge weights that satisfy $\|w_1-w_2\|_{\infty}\leq \delta$. 
%%Then for any subset of vertices $S\subseteq V$, it holds that 
%%\begin{align*}
%%|f_{w_1}(S)-f_{w_2}(S)|\leq \frac{(|S|-1)\delta}{2}. 
%%\end{align*}
%%\end{lemma}
%%\begin{proof}[Proof of Lemma~\ref{lem:diff}]
%%Note that $|E(S)|$ is at most ${|S|\choose 2}$ and $\|w_1-w_2\|_{\infty}\leq \delta$. Then we have 
%%\begin{align*}
%%|f_{w_1}(S)-f_{w_2}(S)|
%%=\frac{|w_1(S)-w_2(S)|}{|S|}
%%\leq \frac{{|S|\choose 2}\cdot \delta}{|S|}
%%=\frac{(|S|-1)\delta}{2}, 
%%\end{align*}
%%which completes the proof. 
%%\end{proof}

The following lemma is a key ingredient for establishing our theorem, 
which provides an upper bound on the difference between two density values for $S\subseteq V$: 
one with the edge-weight vector $w_1$ and the other with the edge-weight vector $w_2$, 
using the distance between the two vectors $w_1$ and $w_2$. 

\begin{lemma}\label{lem:diff_sparse}
Let $G=(V,E)$ be an undirected graph. %Let $d_\text{avg}$ denote the (unweighted) average degree of $G$, i.e., $d_\text{avg}=2m/n$. 
Let $w_1$ and $w_2$ be edge-weight vectors such that $\|w_1-w_2\|_{\infty}\leq \beta$ holds. 
Then, for any $S\subseteq V$, it holds that 
\begin{align*}
|f_{w_1}(S)-f_{w_2}(S)|\leq \sqrt{\frac{m}{2}}\cdot \beta. 
%|f_{w_1}(S)-f_{w_2}(S)|\leq \frac{\sqrt{n\cdot d_\text{avg}}\cdot \delta}{2}. 
\end{align*}
\end{lemma}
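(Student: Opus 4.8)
The plan is to bound the difference of densities directly from the definition and then reduce the whole statement to a purely combinatorial estimate on the quantity $|E(S)|/|S|$. First I would write out the difference using $f_w(S)=\frac{1}{|S|}\sum_{e\in E(S)}w_e$, so that the common factor $1/|S|$ pulls out:
\[
|f_{w_1}(S)-f_{w_2}(S)|=\frac{1}{|S|}\left|\sum_{e\in E(S)}\bigl((w_1)_e-(w_2)_e\bigr)\right|.
\]
Applying the triangle inequality and then the hypothesis $\|w_1-w_2\|_\infty\leq\beta$ to each summand (each $|(w_1)_e-(w_2)_e|$ is at most $\beta$) gives the bound $\frac{|E(S)|}{|S|}\cdot\beta$. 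This portion is routine.

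The crux of the argument, and the step I expect to be the main obstacle, is showing that $\frac{|E(S)|}{|S|}\leq\sqrt{m/2}$. A naive approach using only one elementary bound on $|E(S)|$ will not suffice: the global bound $|E(S)|\leq m$ (valid since $E(S)\subseteq E$) alone gives $|E(S)|/|S|\leq m/|S|$, which is large for small $S$, whereas the simple-graph bound $|E(S)|\leq\binom{|S|}{2}\leq|S|^2/2$ alone gives $|E(S)|/|S|\leq|S|/2$, which is large for big $S$. The key idea is to combine both bounds \emph{multiplicatively}: since $|E(S)|\leq m$ and $|E(S)|\leq|S|^2/2$ hold simultaneously, their product yields $|E(S)|^2\leq m\cdot|S|^2/2$, and taking square roots gives $|E(S)|\leq|S|\sqrt{m/2}$, which is exactly $|E(S)|/|S|\leq\sqrt{m/2}$.

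Finally I would chain the two estimates to conclude
\[
|f_{w_1}(S)-f_{w_2}(S)|\leq\frac{|E(S)|}{|S|}\cdot\beta\leq\sqrt{\frac{m}{2}}\cdot\beta,
\]
which is the claimed inequality. The only point requiring mild care is the degenerate case $S=\emptyset$, where the density is taken to be $0$ and the bound holds trivially; for every nonempty $S$ we have $|S|\geq 1$ and all of the steps above are valid.
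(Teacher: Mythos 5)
Your proof is correct, and it uses exactly the same two ingredients as the paper's proof --- the global bound $|E(S)|\leq m$ and the simple-graph bound $|E(S)|\leq\binom{|S|}{2}\leq |S|^2/2$ --- but combines them differently. The paper splits into two cases at the threshold $|S|=\sqrt{2m}+1$, applying the $\binom{|S|}{2}$ bound when $S$ is small and the bound $|E(S)|\leq m$ when $S$ is large; you instead multiply the two bounds to get $|E(S)|^2\leq m\cdot|S|^2/2$ and hence $|E(S)|/|S|\leq\sqrt{m/2}$ uniformly, which eliminates the case analysis entirely. Your version is arguably cleaner as a one-line argument; the paper's case split has the minor side benefit of exhibiting the sharper intermediate bound $(|S|-1)\beta/2$ for small $S$, which the authors remark on immediately after the lemma (it determines an alternative, worse choice of $t_e$ in Algorithm~2) --- though that bound is equally available from your first displayed inequality by substituting $|E(S)|\leq\binom{|S|}{2}$. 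Your handling of the degenerate case $S=\emptyset$ is a reasonable precaution the paper leaves implicit.
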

\begin{proof}%[Proof of Lemma~\ref{lem:diff_sparse}]
We first consider the case where $|S|< \sqrt{2m}+1$ holds. We have 
%We first consider the case where $|S|\leq \sqrt{n\cdot d_\text{avg}}$ holds. We have 
\begin{align*}
|f_{w_1}(S)-f_{w_2}(S)|
&=\frac{|w_1(S)-w_2(S)|}{|S|}
\leq \frac{{|S|\choose 2}\cdot \beta}{|S|}\\
&=\frac{(|S|-1)\beta}{2}
\leq \sqrt{\frac{m}{2}}\cdot \beta.
%\leq \frac{\sqrt{n\cdot d_\text{avg}}\cdot \delta}{2}.
\end{align*}

Next, we consider the case where $|S|\geq \sqrt{2m}+1$ holds. Since $|E(S)|\leq m$ holds, we have
%We next consider the case where $|S|>\sqrt{n\cdot d_\text{avg}}$ holds. Since $|E(S)|\leq m=n\cdot d_\text{avg}/2$ holds, we have
\begin{align*}
|f_{w_1}(S)-f_{w_2}(S)|
&=\frac{|w_1(S)-w_2(S)|}{|S|}\\
&\leq \frac{|E(S)|\cdot \beta}{|S|}
\leq \frac{m\cdot \beta}{|S|}
\leq \sqrt{\frac{m}{2}}\cdot \beta. 
\end{align*}
Thus, we have the lemma. 
\end{proof}

It should be noted that under an assumption identical to that in Lemma~\ref{lem:diff_sparse}, 
we can also obtain an upper bound depending on the size of $S\subseteq V$, 
i.e., $|f_{w_1}(S)-f_{w_2}(S)|\leq \frac{(|S|-1)}{2}\cdot \beta$, 
which is more effective than the upper bound presented in Lemma~\ref{lem:diff_sparse} in the case where $|S|< \sqrt{2m}+1$ holds. 
However, when we do not know the size of $S$ (i.e., we have only $|S|\leq n$), 
the upper bound presented in Lemma~\ref{lem:diff_sparse} is significantly more effective in practice 
because most real-world graphs are sparse, i.e., $m=O(n)$ holds. 
Note that such upper bounds affect the definition of $t_e$ in Algorithm~\ref{alg:sampling_rev}. 
If we use the above upper bound depending on the size of $S$ alternatively, 
we have $t_e= \left\lceil \frac{(n-1)^2(r_e-l_e)^2 \ln\left(\frac{2m}{\gamma}\right)}{2\epsilon^2\cdot f_{w^-}(S^*_{w^-})^2} \right\rceil$, 
which is significantly less effective than ours in practice. 
%%In fact, if we use the above upper bound depending on the size of $S$ for the choice of $t_e$ in Algorithm~\ref{alg:sampling_rev}, 
%%we have $t_e= \left\lceil \frac{(n-1)^2(r_e-l_e)^2 \ln\left(\frac{2m}{\gamma}\right)}{2\epsilon^2\cdot f_{w^-}(S^*_{w^-})^2} \right\rceil$, 
%%which is much worse than ours in practice. 

%{\color{red} 
%\begin{lemma}\label{lem:diff_sparse}
%Let $G=(V,E)$ be an undirected sparse graph, i.e., $m\leq \alpha n$ holds for some constant $\alpha>0$. 
%Let $w_1$ and $w_2$ be edge weights that satisfy $\|w_1-w_2\|_{\infty}\leq \delta$. 
%Then for any subset of vertices $S\subseteq V$, it holds that 
%\begin{align*}
%|f_{w_1}(S)-f_{w_2}(S)|\leq \beta \sqrt{n}\cdot \delta. 
%\end{align*}
%\end{lemma}
%\begin{proof}[Proof of Lemma~\ref{lem:diff_sparse}]
%We first consider the case where $|S|\leq \sqrt{n}$ holds. 
%\begin{align*}
%|f_{w_1}(S)-f_{w_2}(S)|
%=\frac{|w_1(S)-w_2(S)|}{|S|}
%\leq \frac{{|S|\choose 2}\cdot \delta}{|S|}
%=\frac{(|S|-1)\delta}{2}
%\leq \frac{(\sqrt{n}-1)\delta}{2}. 
%\end{align*}
%
%We next consider the case where $|S|>\sqrt{n}$ holds. Since $m\leq \alpha n$ holds for some constant $\alpha >0$, we have 
%\begin{align*}
%|f_{w_1}(S)-f_{w_2}(S)|
%=\frac{|w_1(S)-w_2(S)|}{|S|}
%%\leq \frac{\alpha n \cdot \delta}{|S|}
%\leq \frac{\alpha n \cdot \delta}{\sqrt{n}}
%\leq \alpha \sqrt{n} \delta. 
%\end{align*}
%\end{proof}
%}

We are now in a position to prove the theorem. 

\begin{proof}[Proof of Theorem~\ref{thm:sampling_rev}]
From the definitions of $l^\text{out}_e$ and $r^\text{out}_e$, we observe that $W_\text{out}\subseteq W$ holds. 
First, we prove that $\mathrm{Pr}[w^\text{true}\in W_\text{out}]\geq 1-\gamma$ holds. 
For any $e\in E$ with $l_e=r_e$, we have $w^\text{true}_e=l_e\ (=r_e)$. 
On the other hand, for any $e\in E$ with $l_e < r_e$, we have 
\begin{align*}
\mathrm{Pr}&\left[w^\text{true}_e \notin [l_e^\text{out},\,r_e^\text{out}]\right]\\
&=\mathrm{Pr}\left[w^\text{true}_e\notin [\max\{l_e,\,\hat{p}_e-\delta\},\,\min\{r_e,\,\hat{p}_e+\delta\}]\right]\\
%&=\mathrm{Pr}\left[w^\text{true}_e\in [\max\{0,\,\hat{p}_e-\delta\},\,\hat{p}_e+\delta]\right]\\
&=\mathrm{Pr}\left[\left|\frac{1}{t_e}\sum_{i=1}^{t_e}w_e^i - w^\text{true}_e\right|> \frac{\epsilon \cdot f_{w^-}(S^*_{w^-})}{\sqrt{2m}}\right]\\
%&=\mathrm{Pr}\left[\left|\frac{1}{t_e}\sum_{i=1}^{t_e}w_e^i - w^\text{true}_e\right|> \frac{\epsilon \cdot f_{w^-}(S^*_{w^-})}{\sqrt{n\cdot d_\text{avg}}}\right]\\
%&=\mathrm{Pr}\left[\left|\frac{1}{t_e}\sum_{i=1}^{t_e}w_e^i - w^\text{true}\right|> \frac{\epsilon \cdot f_{w^-}(S^*_{w^-})}{n-1}\right]\\
&\leq 2\exp\left(\frac{-2t_e^2\cdot \frac{\epsilon^2\cdot f_{w^-}(S^*_{w^-})^2}{2m}}{t_e(r_e-l_e)^2}\right)\\
%&\leq 2\exp\left(\frac{-2t_e^2\cdot \frac{\epsilon^2\cdot f_{w^-}(S^*_{w^-})^2}{n\cdot d_\text{avg}}}{t_e(r_e-l_e)^2}\right)\\
&\leq 2\exp\left(-\ln\left(\frac{2m}{\gamma}\right)\right)
=\frac{\gamma}{m}. 
\end{align*}
The second equality follows from the definition of $\delta$ in the algorithm and the fact that $w^\text{true}_e\in [l_e,r_e]$. 
The first inequality follows from Fact~\ref{fact:hoeffding}, 
and the second inequality follows from the definition of $t_e$ in the algorithm. 
By a union bound, we have 
\begin{align*}
\mathrm{Pr}\left[\exists e\in E,\ w^\text{true}_e\notin [l_e^\text{out},\,r_e^\text{out}]\right]
\leq \frac{\gamma}{m}\cdot m = \gamma,
\end{align*}
%Since we set $l_e^\text{out}\leftarrow \max\{0,\,\hat{p}_e-\delta\}$ and $r_e^\text{out}\leftarrow \hat{p}_e+\delta$, 
which guarantees that $\mathrm{Pr}[w^\text{true}\in W_\text{out}]\geq 1-\gamma$ holds. 

Next, we establish that the output $S_\text{out}\subseteq V$ of Algorithm~\ref{alg:sampling_rev} %(i.e., $S_\text{out}\subseteq V$) 
has a robust ratio of at least $1-\epsilon$ under edge-weight space $W_\text{out}$. 
Recall that $w^-_\text{out}=(l_e^\text{out})_{e\in E}$ and $w^+_\text{out}=(r_e^\text{out})_{e\in E}$. 
Noticing that $\|w^+_\text{out}-w^-_\text{out}\|_\infty \leq 2\delta = \sqrt{\frac{2}{m}}\cdot \epsilon \cdot f_{w^-}(S^*_{w^-})$, 
we have 
\begin{align*}
\frac{f_{w^-_\text{out}}(S^*_{w^+_\text{out}})}{f_{w^+_\text{out}}(S^*_{w^+_\text{out}})}
&= 1- 
\frac{f_{w^+_\text{out}}(S^*_{w^+_\text{out}}) - f_{w^-_\text{out}}(S^*_{w^+_\text{out}})}{f_{w^+_\text{out}}(S^*_{w^+_\text{out}})}\\
&\geq 1-\frac{\sqrt{\frac{m}{2}}\cdot \sqrt{\frac{2}{m}}\cdot \epsilon \cdot f_{w^-}(S^*_{w^-})}{f_{w^+_\text{out}}(S^*_{w^+_\text{out}})}\\
%&\geq 1-\frac{\frac{\sqrt{n\cdot d_\text{avg}}}{2}\cdot \frac{2\epsilon \cdot f_{w^-}(S^*_{w^-})}{\sqrt{n\cdot d_\text{avg}}}}{f_{w^+_\text{out}}(S^*_{w^+_\text{out}})}\\
&\geq 1-\frac{\epsilon \cdot f_{w^-}(S^*_{w^-})}{f_{w^-}(S^*_{w^-})}
%&\geq 1-\frac{\frac{n-1}{2}\cdot \frac{2\epsilon \cdot f_{w^-}(S^*_{w^-})}{n-1}}{f_{w^-}(S^*_{w^-})}
= 1-\epsilon, 
\end{align*}
where the first inequality follows from Lemma~\ref{lem:diff_sparse} 
with the fact that $\|w^+_\text{out}-w^-_\text{out}\|_\infty \leq \sqrt{\frac{2}{m}}\cdot \epsilon \cdot f_{w^-}(S^*_{w^-})$, 
and the second inequality follows from the optimality of $S^*_{w^+_\text{out}}$ in terms of the edge-weight vector $w^+_\text{out}$ 
and Lemma~\ref{lem:monotonicity} with the fact that $w^-\leq w^+_\text{out}$. 
%and the second inequality follows from the fact that $w^-\leq w^+_\text{out}$ holds. 

The output $S_\text{out}\subseteq V$ of Algorithm~\ref{alg:sampling_rev} is actually $S^*_{w^-_\text{out}}\subseteq V$. 
Using the above inequality, we can evaluate the robust ratio of $S_\text{out}$ under edge-weight space $W_\text{out}$ as follows: 
\begin{align*}
\min_{w\in W_\text{out}}\frac{f_w(S^*_{w^-_\text{out}})}{f_w(S^*_w)}
\geq \frac{f_{w^-_\text{out}}(S^*_{w^-_\text{out}})}{f_{w^+_\text{out}}(S^*_{w^+_\text{out}})}
\geq \frac{f_{w^-_\text{out}}(S^*_{w^+_\text{out}})}{f_{w^+_\text{out}}(S^*_{w^+_\text{out}})}
\geq 1-\epsilon. 
\end{align*}

Finally, it is evident that Algorithm~\ref{alg:sampling_rev} runs in time pseudo-polynomial in the size of $G$ and $W$, $\theta$, and $1/\epsilon$. 
%we mention the time complexity of Algorithm~\ref{alg:sampling_rev}. 
%It is evident that the algorithm runs in pseudo-polynomial in the size of $G$ and $W$, and $1/\epsilon$. 
%%Finally, we demonstrate that Algorithm~\ref{alg:sampling_rev} runs in time polynomial 
%%in the size of $G$ and $W$, and $1/\epsilon$. 
%%%in the number of calls of the sampling oracle, the size of $G$ and $W$, and $1/\epsilon$. 
%%The only concern here is that the number of calls of the sampling oracle for each $e\in E$ (i.e., $t_e$) contains the term $(r_e-l_e)^2$, 
%%which suggests the \emph{pseudo-polynomial} number of calls. 
%%However, it is not actually a problem; 
%%in fact, we can assume $\max_{e\in E}r_e\leq 1$ without loss of generality 
%%because we can transform any instance (with $\max_{e\in E}r_e>1$) into such an instance by multiplying the value of $1/\max_{e\in E}r_e$ 
%%for the endpoints of the interval on every edge and the values drawn from the sampling oracle. 
\end{proof}

%\subsection{Preprocessing}
%
%In the first step of Algorithm~\ref{alg:sampling_rev}, we obtain $S^*_{w^-}$, i.e., a densest subgraph on $G$ with edge weight $w^-$. 
%Using the value of $f_{w^-}(S^*_{w^-})$, we can remove redundant vertices. 

\section{Experimental Evaluation}\label{sec:experiments}
The purpose of our experiments is to evaluate the effectiveness of our proposed algorithms 
(i.e., Algorithms~\ref{alg:basic} and~\ref{alg:sampling_rev}) in terms of the quality of solutions and computation time. 
To this end, we compare our algorithms with a certain baseline algorithm 
using both synthetic graphs and popular real-world graphs. 
The baseline algorithm, denoted by \textsf{Random}, first selects $w^\text{rand}$ from $W$ uniformly at random; 
then, it returns a densest subgraph on $G$ with edge weight $w^\text{rand}$. 

%To measure the quality of solutions $S\subseteq V$, we need to compute the robust ratio
All the algorithms we compare need to compute a densest subgraph on $G$ with some edge weight $w$. 
To this end, we employed Charikar's LP-based algorithm 
in conjunction with Balalau et al.'s preprocessing, which was described in Section~\ref{sec:preliminaries}. 
To solve the LP relaxations, we used a state-of-the-art mathematical programming solver, Gurobi Optimizer 7.5.1, 
with default parameter settings except for $\texttt{Method}=1$; 
it stipulates that the LP relaxations are solved using a dual simplex algorithm.  

The experiments were conducted on a Linux machine with Intel Xeon Processor E5-2690 v4 2.6~GHz CPU and 256~GB RAM. 
%The experiments were conducted on a Macbook Pro with Intel Core i7 2.8~GHz CPU and 16~GB RAM. 
The code was written in Python, which is publicly available.\footnote{https://github.com/atsushi-miyauchi/robust-densest-subgraph-discovery} %, except for the use of Gurobi Optimizer. 
%The code was written in Python. %, except for the use of Gurobi Optimizer. 

\subsection{Synthetic Graphs}
Here, we report the results of the computational experiments with synthetic graphs. 
%The objective here is to 
To generate synthetic graphs appropriate for our experimental evaluation, we introduce a random graph model, 
which we refer to as the \emph{planted uncertain dense subgraph model}. 

In this model, we first generate an Erd\H{o}s--R\'enyi random graph $G=(V,E)$ with $n$ vertices and edge probability $p$. 
Then, we focus on a subset of vertices $S'\subseteq V$ consisting of $n'\, (\leq n)$ vertices 
as a planted dense region. 
On this graph $G=(V,E)$, we make an edge-weight space $W=\times_{e\in E}[l_e,r_e]$ as follows: 
Let $\alpha\in [0.0, 0.9]$ be a real parameter. 
For each $e\in E$, we set 
\begin{align*}
[l_e,r_e]=
\begin{cases}
[\texttt{rand}(0.1+\alpha,1.0),\, 1.0]  &\text{if }\,e\in E(S'),\\
[0.1,\, \texttt{rand}(0.1,1.0-\alpha)]   &\text{if }\,e\in E\setminus E(S'),
\end{cases}
\end{align*}
where $\texttt{rand}(\cdot,\cdot)$ is a value 
selected uniformly at random from the closed interval between the two values within the parenthesis. 
Note that the larger the parameter $\alpha$, the more significant the difference between $[l_e,r_e]$ for $e\in E(S')$ and $[l_e,r_e]$ for $e\in E\setminus E(S')$.
For example, when $\alpha =0.0$, each $e\in E(S')$ has $[\texttt{rand}(0.1,1.0),\, 1.0]$ and each $e\in E\setminus E(S')$ has $[0.1,\,\texttt{rand}(0.1,1.0)]$; 
however, when $\alpha=0.9$, each $e\in E(S')$ has $[l_e,r_e]=[1.0,1.0]$ and each $e\in E\setminus E(S')$ has $[l_e,r_e]=[0.1,0.1]$. 
%For example, when $\alpha =0.0$, all $e\in E$ have the same interval $[0.1,1.0]$; 
%whereas when $\alpha=0.9$, every $e\in E(S')$ has $[l_e,r_e]=[1.0,1.0]$ and every $e\in E\setminus E(S')$ has $[l_e,r_e]=[0.1,0.1]$. 
Furthermore, we define a true edge-weight vector $w^\text{true}=(w^\text{true}_e)_{e\in E}$ as follows: 
For each $e\in E$, we set 
\begin{align*}
w^\text{true}_e=
\begin{cases}
\texttt{rand}(\max\{l_e,0.9\},\, 1.0)  &\text{if }\,e\in E(S'),\\
\texttt{rand}(0.1,\, \min\{r_e,0.2\})  &\text{if }\,e\in E\setminus E(S'). 
\end{cases}
\end{align*}
More or less, $w^\text{true}_e$ tends to exhibit a relatively large value for $e\in E(S')$ 
and a relatively small value for $e\in E\setminus E(S')$. 
Note that $w^\text{true}\in W$ holds. 

Algorithm~\ref{alg:sampling_rev} requires a sampling oracle, which we simulate as follows: 
For each $e\in E$, the sampling oracle returns $\texttt{rand}(w^\text{true}_e-\texttt{min}_e,\,w^\text{true}_e+\texttt{min}_e)$, 
where $\texttt{min}_e = \min\{w^\text{true}_e-l_e,\, r_e-w^\text{true}_e\}$. 
It should be noted that for every $e\in E$, 
the expected value is equal to the true edge weight $w^\text{true}_e$, as required. 
%It should be noted that Algorithm~\ref{alg:sampling_rev} can call a sampling oracle 

Throughout our experiments, we set $n=500$ and $p=0.01$. 
In these parameter settings, we construct four types of instances with $n'=50$, $100$, $150$, and $200$; 
in each of these, the parameter $\alpha$ varies from $0.0$ to $0.9$ with increments of $0.1$. 

\begin{figure}[t!]
\centering
\subfigure[$n'=50$]{\includegraphics[width=0.49\columnwidth]{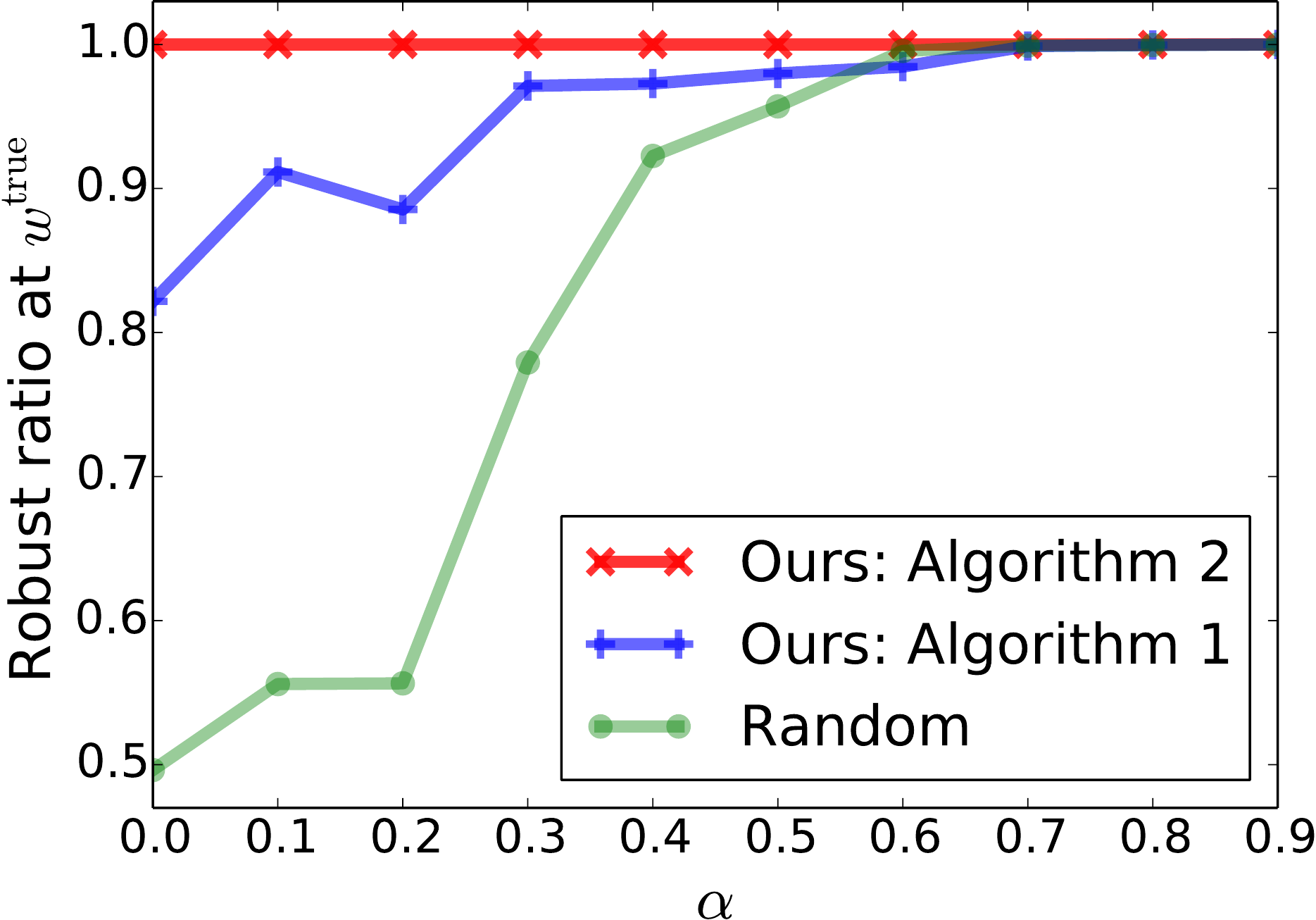}}
\subfigure[$n'=100$]{\includegraphics[width=0.49\columnwidth]{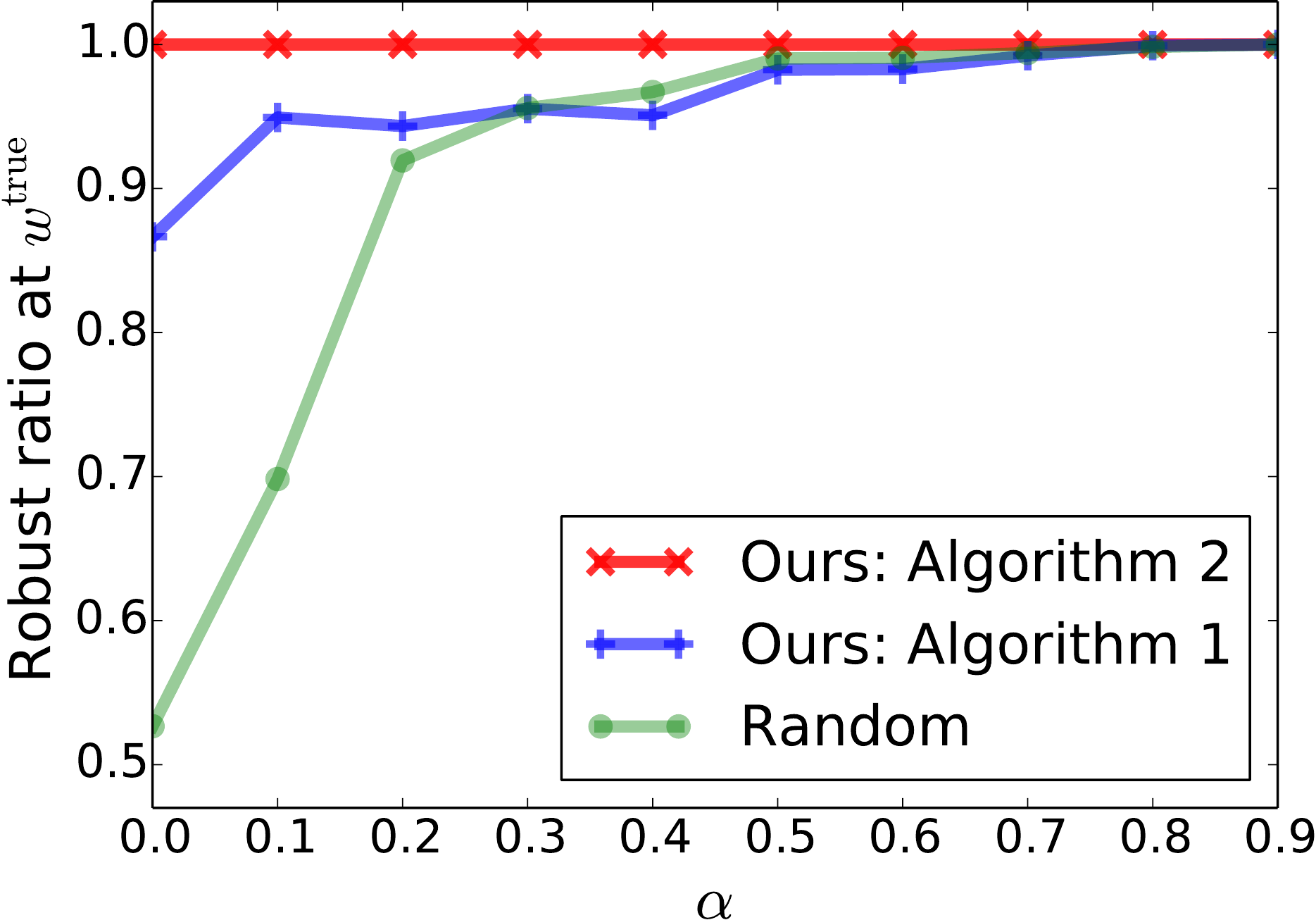}}\\
\subfigure[$n'=150$]{\includegraphics[width=0.49\columnwidth]{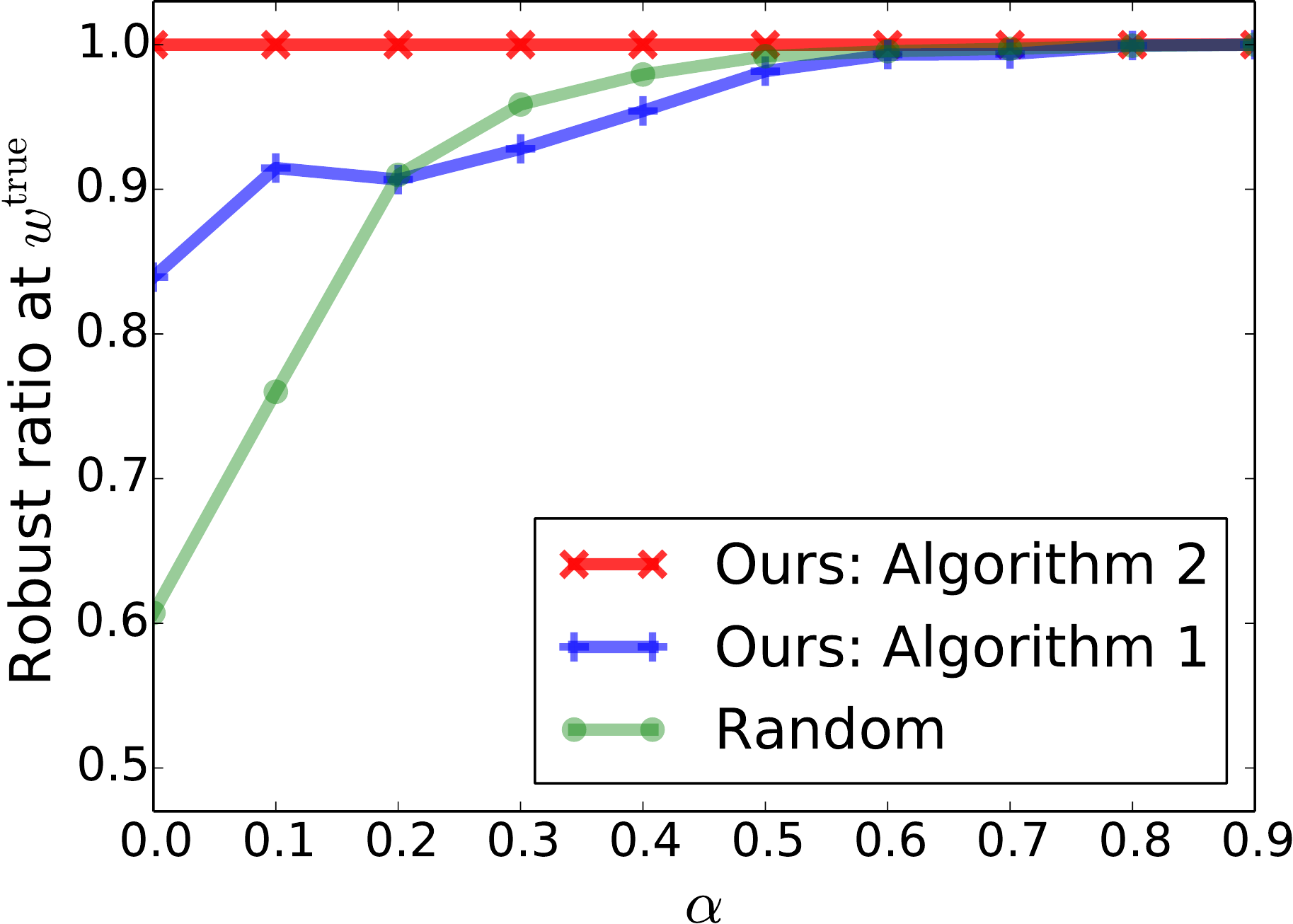}}
\subfigure[$n'=200$]{\includegraphics[width=0.49\columnwidth]{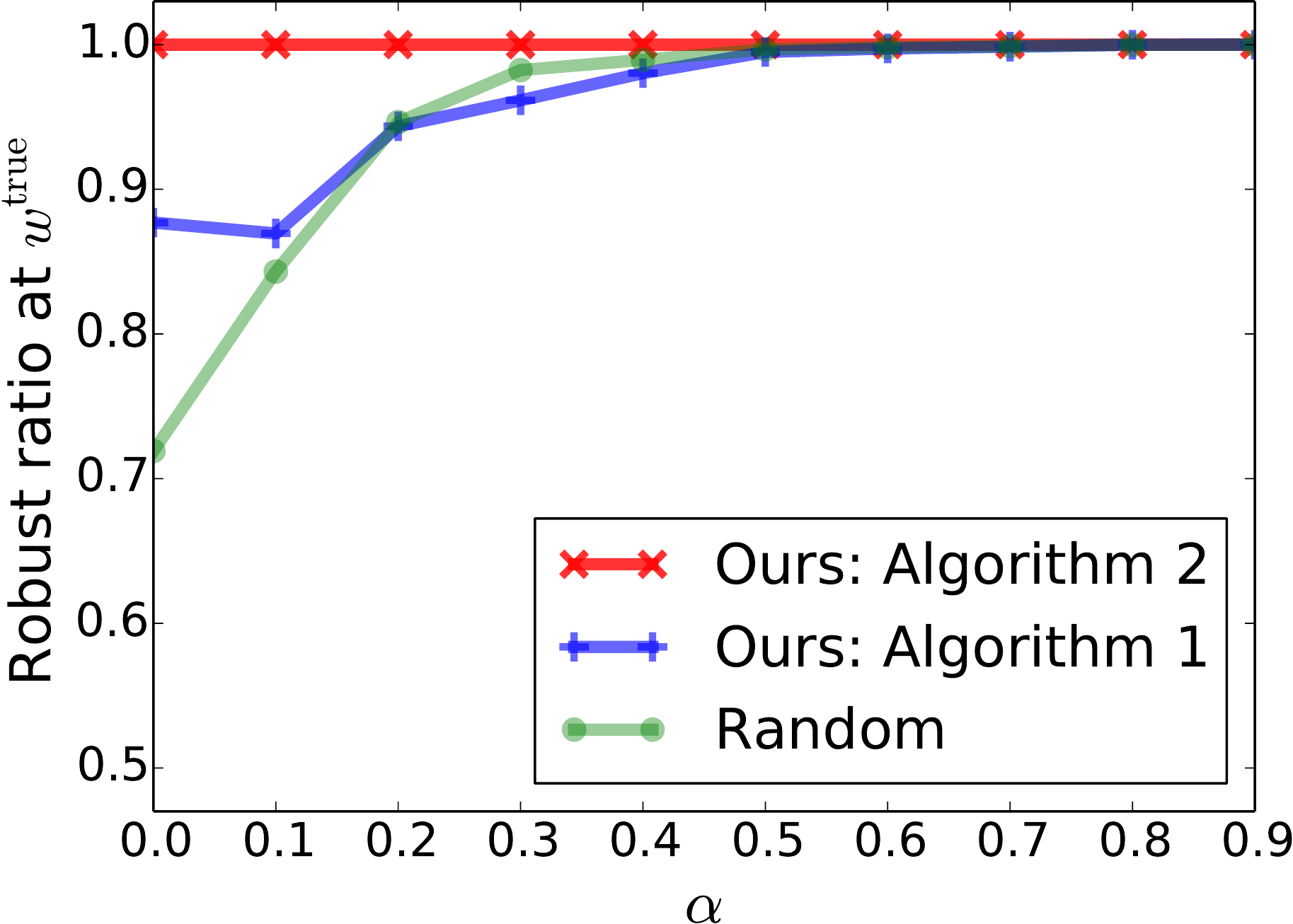}}
\caption{Results for synthetic graphs. Each point corresponds to the average value over 10 graph realizations.}
\label{fig:result_synthetic}
\end{figure}

%%\begin{figure*}[t!]
%%\centering
%%\subfigure[$n'=50$]{\includegraphics[width=0.63\columnwidth]{figure/50-crop.pdf}}
%%\subfigure[$n'=100$]{\includegraphics[width=0.63\columnwidth]{figure/100-crop.pdf}}\\
%%\subfigure[$n'=150$]{\includegraphics[width=0.63\columnwidth]{figure/150-crop.pdf}}
%%\subfigure[$n'=200$]{\includegraphics[width=0.63\columnwidth]{figure/200-crop.pdf}}
%%\caption{Results for synthetic graphs. Each point corresponds to the average value over 10 graph realizations.}
%%\label{fig:result_synthetic}
%%\end{figure*}

The results are shown in Figure~\ref{fig:result_synthetic}. 
The quality of output $S\subseteq V$ is evaluated by the robust ratio at $w^\text{true}$, 
i.e., $f_{w^\text{true}}(S)/f_{w^\text{true}}(S^*_{w^\text{true}})$. %, rather than the robust ratio under $W$; 
%We have no efficient method that computes the robust ratio under $W$ exactly or well-approximately; 
%in fact, our preliminary experiments demonstrated that $\min$. 
With regard to the parameters in Algorithm~\ref{alg:sampling_rev}, we set $(\gamma,\epsilon)=(0.1, 0.5)$. 
Because \textsf{Random} and Algorithm~\ref{alg:sampling_rev} contain randomness, 
we performed them 10 times for each graph realization 
and considered the average value of the robust ratio at $w^\text{true}$ as the result for the graph. 

As is evident, our proposed algorithms, Algorithms~\ref{alg:basic} and~\ref{alg:sampling_rev}, 
outperform the baseline algorithm \textsf{Random}. 
In particular, owing to the power of the use of a sampling oracle, 
Algorithm~\ref{alg:sampling_rev} obtains $S\subseteq V$ with a significantly high robust ratio; 
the robust ratio almost always attains its upper bound (i.e., 1.0), 
which implies that the output of Algorithm~\ref{alg:sampling_rev} is (almost always) a densest subgraph on $G$ with $w^\text{true}$. 
Note that such a high performance of Algorithm~\ref{alg:sampling_rev} is not a trivial outcome 
because we set $(\gamma,\epsilon)=(0.1, 0.5)$. 
Algorithm~\ref{alg:basic} outperforms \textsf{Random}, particularly under relatively challenging instances with small $\alpha$, 
although it exhibits inferior performance for relatively easy instances with large $\alpha$. 
Both Algorithm~\ref{alg:basic} and \textsf{Random} have higher performances for instances with larger $n'$; 
this appears to be a result of the fact that a planted dense region becomes significant as $n'$ increases.

\subsection{Real-World Graphs}
Here, we report the results of the computational experiments with real-world graphs. 
Table~\ref{tab:real_instance} lists the real-world graphs on which our computational experiments were conducted; 
most of these are available in Leskovec and Krevl~\cite{Leskovec_Krevl_14}. 
As is evident, every graph is sparse, i.e., the average degree $\frac{2m}{n}$ is small. 
Note that all the graphs here were made simple and undirected (if necessary) 
by omitting the directions of the edges and by removing self-loops and redundant multiple edges. 
Furthermore, if a graph is not connected, we take only the largest connected component in the graph. 
%Note that each graph is the largest connected component in the original graph (if the original graph is disconnected). 
To effectively evaluate the robustness of the algorithms, 
we introduce a random model for constructing an edge-weight space and a true edge-weight vector on a (real-world) graph; 
we call this model the \emph{knockout densest subgraph model}.

\begin{table*}[t]
\begin{center}
\caption{Real-world graphs used in our experiments. 
%Here $d_\mathrm{avg}$ denotes the (unweighted) average degree of a graph, $\mathrm{i.e.}$, $d_\mathrm{avg}=2m/n$, 
Here, $|S^*|$ denotes the size of a densest subgraph on a graph (with unweighted edges).}\label{tab:real_instance} 
%The second last column lists the size of a densest subgraph on each graph (with unweighted edges).}\label{tab:real_instance} 
\begin{tabular}{lrr@{\hspace{4.0ex}}rrl}
\toprule
Name          & $n$   & $m$   &$\frac{2m}{n}$  &$|S^*|$ & Description\\
%Name          & $n$   & $m$   &$d_\text{avg}$  &$|S^*|$ & Description\\
\midrule
\textsf{Karate           } &      34     &       78       &4.59   &16       &Social network  \\ 
\textsf{Lesmis           } &      77     &      254       &6.60   &23       &Co-appearance network  \\ 
\textsf{Polbooks         } &     105     &      441       &8.40   &24       &Co-purchased network  \\ 
\textsf{Adjnoun          } &     112     &      425       &7.59   &48       &Word adjacency   \\ 
\textsf{Football         } &     115     &      613       &10.66   &115     &Sports game network\\ %
\textsf{Jazz             } &     198     &     2,742      &27.70   &100     &Social network     \\ 
\textsf{email-Eu-core    } &     986     &     16,064     &32.58   &224     &Email communication    \\ 
\textsf{Email            } &    1,133    &     5,451      &9.62   &301      &Email communication    \\ 
\textsf{Polblogs         } &    1,222    &    16,714      &27.36   &139     &Blog network     \\ 
\textsf{Wiki-Vote        } &    7,066    &   100,736      &28.51   &835     &Wikipedia ``who-votes-whom''     \\ 
\textsf{ca-HepTh         } &    8,638    &    24,806      &5.74   &32       &Co-authorship network  \\ 
\textsf{ca-HepPh         } &   11,204    &   117,619      &21.00   &239     &Co-authorship network    \\ 
\textsf{ca-CondMat       } &   21,363    &    91,286      &8.55   &30       &Co-authorship network  \\ 
\textsf{AS-22july06      } &   22,963    &    48,436      &4.22   &104      &Internet at autonomous system level   \\%
\textsf{email-Enron      } &   33,696    &   180,811      &10.73   &555     &Email communication     \\ 
\textsf{web-Stanford     } &  255,265    &  1,941,926   &15.21   &597       &Web graph   \\%
\textsf{web-NotreDame    } &  325,729    &  1,090,108   &6.69   &1,367      &Web graph \\%
\bottomrule
\end{tabular}
\end{center}
\end{table*}

\begin{table*}[t]
\begin{center}
\caption{Results for real-world graphs.}\label{tab:real_result}
\begin{tabular}{lrrrrrrrrrrrrrr}
\toprule
Name
&\multicolumn{2}{c}{\textsf{Random}} 
&&\multicolumn{2}{c}{Algorithm~\ref{alg:basic}}   
&&\multicolumn{3}{c}{Algorithm~\ref{alg:sampling_rev}}\\% with $(\gamma,\epsilon)=(0.9,0.9)$}\\
%&&\multicolumn{3}{c}{Algorithm~\ref{alg:sampling_rev} with $(\gamma,\epsilon)=()$}\\
\cline{2-3}
\cline{5-6}
\cline{8-10}
%\cline{12-14}
&Ratio&Time(s)
&&Ratio&Time(s)
&&Ratio&Time(s)&\#Calls (avg.)\\%&Frac.
%&&Ratio&time(s)&\#Sampling\\%&Frac.\\
\midrule
\textsf{Karate          } &0.992    &0.00    &&1.000  &0.00    &&1.000       &0.01   &92.65  \\%&&       &   &\\ 
\textsf{Lesmis          } &0.992    &0.01    &&1.000  &0.01    &&1.000       &0.03   &77.12 \\%&&       &   &\\ 
\textsf{Polbooks        } &0.992    &0.02    &&0.980  &0.01    &&1.000       &0.06   &99.97 \\%&&       &   &\\ 
\textsf{Adjnoun         } &0.908    &0.02    &&0.958  &0.02    &&1.000       &0.15   &340.66  \\%&&       &   &\\ 
\textsf{Football        } &0.995    &0.04    &&1.000  &0.04    &&1.000       &2.05   &3991.49  \\%&&       &   &\\ 
\textsf{Jazz            } &0.999    &0.15    &&0.990  &0.08    &&1.000       &0.38   &118.86  \\%&&       &   &\\ 
\textsf{email-Eu-core   } &0.960    &3.02    &&0.994  &2.35    &&1.000       &11.48   &492.59  \\%&&       &   &\\ 
\textsf{Email           } &0.851    &0.82    &&0.980  &0.69    &&1.000       &8.88    &1918.53  \\%&&       &   &\\ 
\textsf{Polblogs        } &0.999    &1.89    &&0.997  &1.46    &&1.000       &5.33    &186.84  \\%&&       &   &\\ 
\textsf{Wiki-Vote       } &0.961    &50.45   &&0.994  &54.27    &&1.000       &201.25  &1359.74\\%&&       &   &\\ 
\textsf{ca-HepTh        } &1.000    &1.29    &&1.000  &1.34    &&1.000       &7.86     &677.20 \\%&&       &   &\\ 
\textsf{ca-HepPh        } &0.713    &32.85    &&0.995  &19.70    &&1.000       &57.05  &339.81 \\%&&       &   &\\ 
\textsf{ca-CondMat      } &0.999    &9.03    &&0.998  &8.76    &&1.000       &142.98  &2751.65  \\%&&       &   &\\ 
\textsf{AS-22july06     } &0.940    &2.02    &&0.987  &2.25    &&1.000       &8.40    &623.54  \\%&&       &   &\\ 
\textsf{email-Enron     } &0.952    &106.20    &&0.998  &179.63    &&1.000       &374.16  &1613.85  \\%&&       &   &\\ 
\textsf{web-Stanford    } &0.998    &55.27    &&0.993  &47.19    &&1.000       &419.49    &1129.58  \\%&&       &   &\\ 
\textsf{web-NotreDame   } &1.000    &482.26    &&0.999  &125.07    &&1.000       &760.15  &2378.03  \\%&&       &   &\\ 
\bottomrule
\end{tabular}
\end{center}
\end{table*}

First, we explain the intuition behind the model. 
Let $G=(V,E)$ be a given (real-world) undirected graph and $S^*\subseteq V$ be a densest subgraph on $G$ (with unweighted edges). 
%In the knockout densest subgraph model, we use the information of a densest subgraph $S^*$ on $G$ (with unweighted edges). 
Suppose here that we put a very small true edge weight $t_e$ for each $e\in E(S^*)$, 
whereas we put a relatively large true edge weight $t_e$ for each $e\in E\setminus E(S^*)$. 
Suppose also that the edge-weight space only marginally reflects the values of the true edge weights. 
In such a situation, from the structure (i.e., the existence/non-existence of edges) of the graph, 
any algorithm that does not consider the edge-weight space or sampling oracle with adequate caution tends to detect $S^*$ 
despite the fact that $S^*$ is no longer likely to be a densest subgraph on $G$ with $w^\text{true}$. 

The knockout densest subgraph model is a random model that simulates the above situation. 
Specifically, we make an edge-weight space $W=\times_{e\in E}[l_e,r_e]$ as follows: 
For each $e\in E$, we set 
\begin{align*}
[l_e,r_e]=
\begin{cases}
[0.1,\, \texttt{rand}(0.1,0.9)]   &\text{if }\,e\in E(S^*),\\
[\texttt{rand}(0.2,1.0),\, 1.0]  &\text{if }\,e\in E\setminus E(S^*). 
\end{cases}
\end{align*}
%Note that 
In addition, we define a true edge-weight vector $w^\text{true}=(w^\text{true}_e)_{e\in E}$ as follows: 
For each $e\in E$, we set 
\begin{align*}
w^\text{true}_e=
\begin{cases}
\texttt{rand}(0.1,\, \min\{r_e,0.11\})  &\text{if }\,e\in E(S^*),\\
\texttt{rand}(\max\{l_e,0.99\},\, 1.0)  &\text{if }\,e\in E\setminus E(S^*).
\end{cases}
\end{align*}
%This is the reason why we call this model the knockout densest sub
Note that $w^\text{true}\in W$ holds. 
Algorithm~\ref{alg:sampling_rev} requires a sampling oracle, 
which we simulate in a manner identical to that in the planted uncertain dense subgraph model. %experiments with synthetic graphs. 
%which is designed in exactly the same way as in the experiments with synthetic graphs. 
%therefore; for every $e\in E$, the expected value is equal to the true edge weight $w^\text{true}_e$, as required. 
%%%

The results are summarized in Table~\ref{tab:real_result}. 
The quality of output $S\subseteq V$ is again evaluated by the robust ratio at $w^\text{true}$. 
To observe the scalability, we list the computation time for the algorithms. 
With regard to Algorithm~\ref{alg:sampling_rev}, we also list the average number of calls of the sampling oracle per edge. 
%With regard to Algorithm~\ref{alg:sampling_rev}, we also list the total number of calls of the sampling oracle. 
With regard to the parameters in Algorithm~\ref{alg:sampling_rev}, 
to apply the algorithm to large graphs, we set $(\gamma,\epsilon)=(0.9,0.9)$. 
Moreover, we perform a simple preprocessing algorithm, which was inspired by Balalau et al.'s preprocessing technique, 
to reduce the size of a given graph. 
This preprocessing does not impair the theoretical performance guarantee of our algorithm. 
Owing to space limitations, we omit the details here. 
%{\color{red} we test the following two settings: $(\gamma,\epsilon)=(0.1,0.5)$ and $(0.5,0.9)$.}
%To evaluate the the average number of calls of the sampling oracles and the fraction of . 
With regard to \textsf{Random} and Algorithm~\ref{alg:sampling_rev}, 
we performed them 10 times on each graph 
and considered the average value of each of the robust ratio at $w^\text{true}$ and the computation time 
as the results for the graph. %, and the number of calls of the sampling oracle (for Algorithm~\ref{alg:sampling_rev}). 

As is evident, the trend is consistent with the results of the experiments with synthetic graphs; 
that is, Algorithms~\ref{alg:basic} and~\ref{alg:sampling_rev} outperform \textsf{Random}. 
Algorithm~\ref{alg:sampling_rev} (almost always) obtains a densest subgraph on $G$ with $w^\text{true}$; 
Algorithm~\ref{alg:basic} outperforms \textsf{Random}, particularly in relatively challenging instances 
for which \textsf{Random} only obtains $S\subseteq V$ with a robust ratio of at most 0.95. 
%It should be noted that Algorithm~\ref{alg:sampling_rev} 
Algorithm~\ref{alg:sampling_rev} is not significantly worse in terms of the scalability.

\section{Conclusion}\label{sec:conclusion}
In this study, we have provided a framework for dense subgraph discovery under the uncertainty of edge weights. 
Specifically, we have addressed such an uncertainty issue using the theory of robust optimization. 
First, we formulated the robust densest subgraph problem (Problem~\ref{prob:rds}) 
and presented a simple algorithm (Algorithm~\ref{alg:basic}). 
We then formulated the robust densest subgraph problem with sampling oracle (Problem~\ref{prob:sampling})
that models dense subgraph discovery using an edge-weight sampling oracle, %in real-world applications, 
and presented an algorithm with a strong theoretical performance guarantee (Algorithm~\ref{alg:sampling_rev}). 
Computational experiments using both synthetic graphs and popular real-world graphs demonstrated 
the effectiveness of our proposed algorithms. 

%Our study leaves 

\section*{Acknowledgment}
The authors wish to thank the anonymous reviewers for their valuable comments. 
The authors also wish to thank Yuko Kuroki for her helpful comments, which improved the presentation of the paper. 
This work was supported by JST CREST Grant Numbers JPMJCR14D2 and JPMJCR15K5, Japan. 
A.M. is supported by a Grant-in-Aid for Research Activity Start-up (No. 17H07357).

\bibliographystyle{abbrv}
\bibliography{robust_densest_subgraph_discovery}

\end{document}